\documentclass[letterpaper,twocolumn,10pt]{article}
\usepackage{usenix}
\usepackage{epsfig,endnotes}
\usepackage{amsmath, amsthm, amsfonts}
\usepackage{amssymb}
\usepackage{array}
\usepackage{graphicx}
\usepackage{mdframed} % for 'pointer'
\usepackage{hyperref}
\usepackage{subfiles}
\usepackage{multirow}
\usepackage{cryptocode}
\usepackage{algorithm}
\usepackage{algpseudocode}
\usepackage{makecell}
\usepackage{refcount}

\newtheorem{game}{Game}
\algnewcommand{\LeftComment}[1]{\Statex \(\triangleright\) #1}
\newtheorem{theorem}{Theorem}
\newtheorem{lemma}{Lemma}
\newtheorem{definition}{Definition}

\newtheorem*{theorem*}{Theorem}
\newtheorem*{lemma*}{Lemma}

\newcommand{\IE}{\mathbb{E}}
\newcommand{\IP}{\mathbb{P}}
\usepackage{subcaption}

\begin{document}

\author{
{\rm De Zhang Lee}\\
National University of Singapore 
\and
{\rm Han Fang}\\
National University of Singapore 
\and
{\rm Ee-Chien Chang}\\
National University of Singapore 
} % end author
%don't want date printed
\date{}

%make title bold and 14 pt font (Latex default is non-bold, 16 pt)
% \title{Proof-of-Authorship for Diffusion-based AI Generated Content}
\title{\Large \bf Proof-of-Authorship for Diffusion-based AI Generated Content}

% \author{
% {\rm Anonymous Author}
% }

\maketitle

\begin{abstract}

Recent advancements in AI-generated content (AIGC) have introduced new challenges in intellectual property protection and the authentication of generated objects. We focus on scenarios in which an author seeks to assert authorship of an object generated using latent diffusion models (LDMs), in the presence of adversaries who attempt to falsely claim authorship of objects they did not create.
While proof-of-ownership has been studied in the context of multimedia content through techniques such as time-stamping and watermarking, these approaches face notable limitations. In contrast to traditional content creation sources (e.g., cameras), the LDM generation process offers greater control to the author. Specifically, the random seed used during generation can be deliberately chosen. By binding the seed to the author's identity using cryptographic pseudorandom functions, the author can assert to be the creator of the object. We refer to this stronger guarantee as proof-of-authorship, since only the creator of the object can legitimately claim the object.
This contrasts with proof-of-ownership via time-stamping or watermarking, where any entity could potentially claim ownership of an object by being the first to timestamp or embed the watermark.
We propose a proof-of-authorship framework involving  a probabilistic adjudicator who quantifies the probability that a claim is false. Furthermore, unlike prior approaches, the proposed framework does not involve any secret. We explore various attack scenarios and analyze design choices using Stable Diffusion 2.1 (SD2.1) as representative case studies.
\end{abstract}

\section{Introduction}

Latent diffusion models (LDMs) 
such as Stable Diffusion, DALL-E, and Midjourney,
have gained prominence for their ability to generate 
remarkably high-quality images from textual descriptions.
Generating such images require significant prompt engineering and 
tuning effort on the part of the author, and it is in the 
author's interest to protect their generated content against misattribution.  
Although there are debates on whether AI generated object should be 
eligible for copyright protection \cite{hbr_generative_ai_ip_problem}, 
it is generally a consensus that authorship of
AI generated content should belong to the user who created it
\cite{Caldwell2023What, wang2023authorship}.

%Watermarking can be used for copyright protection in many ways. For instance, it can facilitate  traitor tracing, there an investigator could extract the secret message to uncover the source. Fragile watermark\cite{xx} can also attain some form of authenticity, where the watermark is robust against  permissible modification, but not modifications that go above certain thresholds.  Another approach employs digital watermarking, where the author modifies the object slightly so as to embed a secret watermark, and later exhibits knowledge of the secret to serve as the proof.

There are several well-known proof-of-ownership approaches that enable  an author to claim ownership of digital objects.  One widely used approach  is time-stamping, where an owner claims ownership of an object by publicly time-stamping the object with a digital signature. However, this method alone is insufficient, as it is possible for anyone to timestamp any object without being its author, but by being the first to claim authorship publicly.  

Another  approach is digital watermarking, in which  the author embeds a secret watermark into the objects, and the author can later prove ownership  by demonstrating  knowledge of this secret. There are two groups of  techniques in embedding the watermarks. One could employ pixel-based methods that embed the watermark in the pixel domain. Nonetheless, straightforward adoptions are vulnerable to ambiguity attacks, where  an adversary inverts and forges another secret watermark that happens to be in the object, resulting in an ambiguous situation where both the owner and the adversary can plausibly claim authorship \cite{DBLP:journals/jsac/CraverMYY98}. While such attacks can be mitigated by applying a cryptographic hash to the secret before embedding \cite{DBLP:conf/ih/LiC04}, watermarking still faces inherent vulnerability to removal attacks. 
The secret watermark can also be embedded through inversion (or latent)-based watermarking \cite{wen2023treeringwatermarksfingerprintsdiffusion, yang2024gaussianshadingprovableperformancelossless, DBLP:journals/corr/abs-2410-07369} whereby  objects are generated with randomness derived from some secret. For example, the starting point is sampled from a distribution parameterized by a secret key,  as in Gaussian Shading~\cite{yang2024gaussianshadingprovableperformancelossless} or Pseudorandom Error Correcting Code (PRC) watermark~\cite{DBLP:journals/corr/abs-2410-07369}.  Similar to pixel-based watermarking, without additional protection, an adversary might able to carry out ambiguity attack.   Furthermore, both pixel and latent-based watermarks relies on an author's secret  to prove ownership. Although  zero-knowledge proofs can be incorporated to protect such secrets, \cite{goren2025noiseprintsdistortionfreewatermarksauthorship}, verification  is computationally expensive, especially for the high-dimensional objects.

In this paper, we show that a simple lightweight mechanism that does not rely on any secret is sufficient for proof-of-ownership. In fact, the mechanism  achieves beyond proof-of-ownership in the sense that only objects generated by LDM by the author can be claimed, in contrast to proof-of-ownership whereby existing known objects could be claimed. Hence, we refer to this as a proof-of-authorship (POA) framework.
Under this framework, the random starting point of the LDM is cryptographically bound to the author's identity and some generation meta-data. Specifically, the seed used to sample the random starting point is obtained by feeding authorship information, such as the author’s identity, prompt,  and model version,  into a cryptographically secure pseudorandom function (PRF).  To claim authorship of an object $I$, the author demonstrates that $I$ is similar to an object generated using a starting point bound to the author, and with high statistical confidence,  the probability $\alpha$ of a randomly generated object which     happens to resemble $I$ is extremely small. Furthermore, via cryptographic binding, no probabilistic polynomial-time adversary can outperform random guessing when forging such authorship information. 

While the framework is conceptually simple, it involves several subtle challenges. First, although LDM generation is complex, it does not constitute a cryptographic one-way function, raising the possibility that certain inversion attacks may be feasible.  Second, because the claim relies on statistical hypotheses over rare events, it is unclear whether the associated statistical tests can be conducted with high confidence when only small sample sizes are available.
%First, ambiguity attacks remain a concern: given a generated object, can a forger construct an alternative proof-of-authorship that yields a highly similar object? Second, is it possible to characterize failure scenarios in our POA framework. Unlike watermarking schemes, which are typically assumed to operate reliably on all generated content, our POA framework may admit rare instances in which valid authorship cannot be asserted with high confidence.
%Finally, we ask whether there exists a POA authentication procedure that can reliably distinguish genuine objects from forgeries with high statistical confidence under realistic adversarial models?

We propose that, in addition to the well-accepted hardness assumptions on PRFs (Assumption A1 in Section \ref{sec:assumption}), one should impose an additional assumption on the statistical properties of LDM (Assumption A2 in Section \ref{sec:assumption}). Specifically, we assume that the similarity measure of truly randomly generated objects follows a sub-exponential distribution. This is arguably a weak assumption and holds in practice because by design, LDMs generated objects that follow a multivariate Gaussian distribution. Since similarity measure follow the form of cosine similarity, this further leads to a sub-exponential distribution.  

The first assumption (A1) reduces any adversary’s strategy to random guessing, even if the adversary can invert LDM. The second assumption (A2) is imposed for efficiency, as it enables statistical testing on rare events with a small sample size. In cases where Assumption A2 does not hold (for instance, if the LDM contains backdoors) we can show that the adversary still cannot succeed provided that the verifier has computational resources comparable to those of the adversary.

We introduce a Probabilistic Adjudicator (PA) to conduct statistical tests and a Judge to determine whether the parameters are reasonable. However, this division of responsibilities may vary across application scenarios. \\

%forger can outperform random guessing when attempting an ambiguity attack. Second, potential failure scenarios are explicitly analyzed and mitigated by examining the distribution of similarity scores among generated objects. Finally, we provide a statistically optimal authentication procedure, known as a probabilistic adjudicator (PA), which authenticates whether or not a given object corresponds to a POA with negligible false-positive rates. Together, these contributions establish a lightweight, publicly verifiable proof-of-authorship for AI-generated content. 

\noindent {\em Contributions.} Our contributions are as follows:
\begin{enumerate}
    \item We propose a lightweight  proof-of-authorship framework for LDM generated content that does not require the authors to keep secrets, and provide more assurance than proof-of-ownership.   We propose imposing an assumption on the LDM  (Assumption A2) for efficient statistical testing.
    \item 
    We show that:
    \begin{enumerate}
        \item No probabilistic polynomial-time adversary can outperform random guessing when attempting to forge a claim.
        \item Under Assumption A2, only a small number of samples (compared to naive sampling) is needed to statistically determine whether a claim is valid. 
        \item The proposed authorship verification test is statistically optimal.
    \end{enumerate}
    \item We evaluate our approach on Stable Diffusion 2.1 and show that reasonable security parameters are attainable. For typically generated objects and a target random forger's attack success probability of $2^{-50}$, our method attains accuracy of 100\% when determining whether an object corresponds to a given author. Moreover, even under distortions (see Figure \ref{fig:intro_distort} for an example), the probability that a forger successfully forges a claim  remains below $2^{-50}$.
    % Furthermore, in the presence of including Gaussian noise (up to $\sigma^2=9$), moderate JPEG compression (up to quality of 50), and affine transformations. 
    \end{enumerate}

\begin{figure}[t]
    \centering
    \begin{subfigure}[t]{0.45\linewidth}
        \centering
        \includegraphics[width=\linewidth]{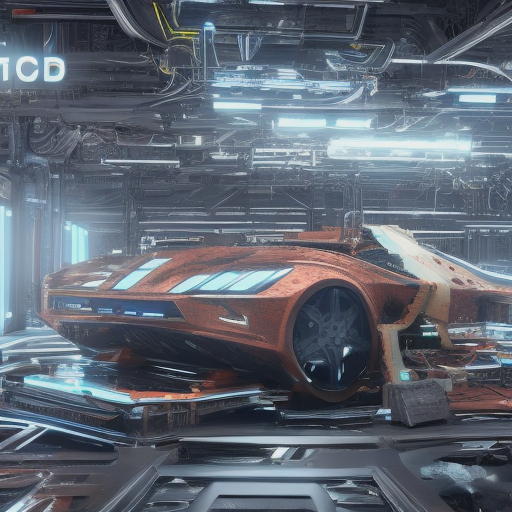}
        \caption{Original}
        \label{fig:intro_distort_orig}
    \end{subfigure}
    % \hfill
    % \begin{subfigure}[t]{0.32\linewidth}
    %     \centering
    %     \includegraphics[width=\linewidth]{fig/intro_jpeg_50.png}
    %     \caption{JPEG (quality 50)}
    %     \label{fig:intro_distort_jpeg50}
    % \end{subfigure}
    \hfill
    \begin{subfigure}[t]{0.45\linewidth}
        \centering
        \includegraphics[width=\linewidth]{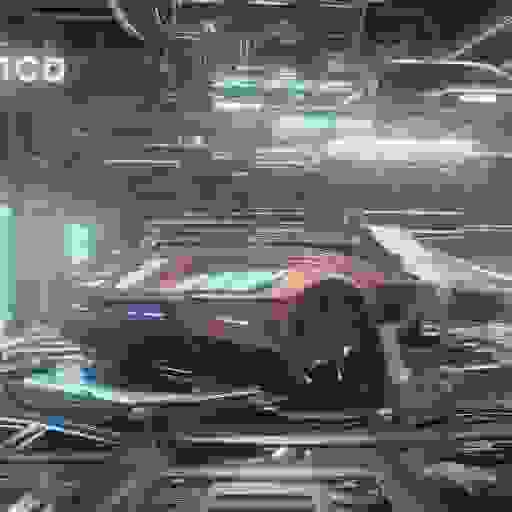}
        \caption{JPEG (quality 1)}
        \label{fig:intro_distort_jpeg1}
    \end{subfigure}

    \caption{
    The similarity score between the original and distorted image is $T = 8.6$. For any forger who is unable to break the pseudorandom function, the probability of generating an image whose similarity score with the contested image meets or exceeds $T$ in a single attempt is at most $2^{-50}$. This implies that the identity cryptographically bound to original image is likely not forged, and therefore corresponds to the legitimate author.
    % For the original image and JPEG-compressed version, 
    % the probability that a random forger successfully forges a claim remains below \(2^{-50}\).
    }
    \label{fig:intro_distort}
\end{figure}

\section{Background  and Notations}
\label{section:notations}
\subsection{Latent-Diffusion model}
A LDM takes in meta-parameter $m$ (e.g., type of scheduler, number of timesteps, etc.), an embedding $e$ generated from a prompt $p$ (for e.g., in SD 2.1, 
$e$ a real vector of dimension $n_{\tt p} \times 77 \times 1,024$ where $n_{\tt p}$ is a integer dependent on  $p$) , and a starting point $s$ in the latent space (e.g., in SD 2.1, this is a vector in $\mathbb{R}^{4 \times 64 \times 64}$), and outputs the generated image $L_m(e,s)$.  

The authors are typically oblivious to the numeric values in the embedding $e$. Instead, an author would provide a text prompt  which is then converted to an embedding $e$ through a text encoder. For e.g.,  the CLIP Tokenizer converts a text to the real vector of the embedding $e$ of dimensions $n_{\tt p} \times 77 \times 1,024$, where $n_{\tt p}$ is the number of tokens in the prompt.  

The starting latent point is randomly generated following a distribution $\mathcal{D}_0$ suitable for the diffusion process, for e.g., a multivariate Gaussian distribution. This starting point can be generated using some pseudorandom number sampling $G:\{0,1\}^{n_{ \tt seed}}\rightarrow \mathcal{L}$ that maps a  binary sequence of length $n_{\tt seed}$ to a point in the latent-space $\mathcal{L}$, such that the output distribution $\mathcal{D}_0$ approximates the desired  multivariate Gaussian distribution when the seeds are uniformly chosen.  Small amounts of noise is iteratively removed over several timesteps, conditional on the prompt and LDM parameters, resulting in a latent-space representation of the image. Finally, this latent-space representation is upscaled into an actual image (typically of a higher dimension) using a variational autoencoder (VAE). 

To create an image,  an author could experiment with different prompts, seeds and meta parameters to derive a desired generated object $L_m (e, s)$, where $e$ is the embedding derived from the prompt, $s$ is the starting latent point generated from the seed, and $m$ is the meta parameter.   To summarize, from the author's perspective, with $m$, $p$ and a {\em seed},  the generated image is 
$   L_m ( e, s )$,
where $e = \mbox{CLIP} (p)$ and  $s = G ( \mbox{\em seed} ).$
% \[      L_m ( e, s ),
% \mbox{where\ \ \ \ } e = \mbox{CLIP} (p) \mbox{\ and\ }  s = G ( \mbox{\em seed} ).\]

The effect by $e$ is significant  as it corresponds to the semantics of the generated content, whereas $m$ corresponds to the visual quality, and different starting points $s$ lead to different generated content with similar semantics.  As a function, the LDM is assumed to admit some ``smoothness'' properties, in the sense that a slight perturbation of $e$, $s$ and $m$ will lead to small perturbations of the image in the pixel domain (under the $\ell_2$-metric) \cite{liang2024unravelingsmoothnesspropertiesdiffusion, yang2024lipschitzsingularitiesdiffusionmodels, guo2023smoothdiffusioncraftingsmooth}.  

\subsubsection{VAE Autoencoder} \label{subsubsection:vae}
In a typical LDM pipeline, the denoised latent is upscaled into an image of higher dimensions using a pretrained VAE. This is for computational efficiency. The VAE encoder maps an image $x$ into a latent $z$, which admits the conditional distribution
$q_\phi(z \mid x) = \mathcal{N}(z \mid \mu_x, \sigma_x^2 I)$,
where both $\mu_x$ and $\sigma_x$ are vectors dependent on $x$.  Sampling from this distribution yields latent-space representations of images, which decode back into images perceptually similar to the original image $x$. Furthermore, VAEs are designed such that the unconditional prior distribution of the latent space $z$ follows a $\mathcal{N}(0, I)$ distribution. 
% In particular, the individual components in the latent space are independent.  This structural property will be used in our subsequent analysis. 

\subsection{Image Similarity Function  } \label{subsection:image_similarity_measure}
Measuring similarity between images in the pixel domain is challenging. Possible similarity functions include the $\ell_2$ metric or some metric derived from human visual system (see \cite{DU2020115713} for a survey). However, such metrics are sensitive to certain transformations (e.g. mild geometric transformations or small amounts of noise) which maintain perceptual similarity but induces large distances when these metrics are employed, and may be susceptible to higher false-positive rates.

To capture similarity under reasonable transformations which maintain perceptual similarity, we measure distances in the latent space rather than directly in pixel space. The latent representation provides a lower-dimensional embedding of the image and, and inherent robustness to small perturbations and noise due to the VAE training process \cite{Kingma_2019}.
This has been validated empirically in practice, see for example \cite{valleti2023physicschemistryparsimoniousrepresentations, lee2025enhancingvariationalautoencoderssmooth}. 

% Consider a class of admissible perceptual-preserving transformations $\mathcal{T}$ (e.g., compression, blurring, rotation) and a distance function $d(\cdot, \cdot)$ on the latent space. Given a VAE encoder $\mathrm{enc}$, we define the similarity between two images $I_0$ and $I_1$ as the minimum latent-space distance under transformations of $I_1$:
% \begin{equation} \label{eqn:similarity}
% \mathrm{Sim}(I_0, I_1) = \min_{T \in \mathcal{T}} d\bigl(\mathrm{enc}(I_0), \mathrm{enc}(T^{-1}(I_1))\bigr).
% \end{equation}
% In practice, identifying the most suitable $T \in \mathcal{T}$ that satisfies (\ref{eqn:similarity}) can be done either manually (e.g., through visual inspection) or algorithmically; see \cite{chen2024surveydeeplearningmedical} for a comprehensive survey. Similar considerations arise in other authentication schemes (e.g., digital signatures), where even a small change to a binary message $m$ (such as adding a padding or small bit shift) can yield a completely different hash. In such cases, an appropriate transformation is also needed, though in a different domain. In this work, however, the exact method of selecting the optimal transformation is not our main focus. Instead, we concentrate on the authentication aspect of the proposed proof-of-authorship scheme and assume that $T(I_1) = I_i$ in our subsequent analysis. Nonetheless. our empirical evaluations in Section \ref{section:similarity_eval} considers some transformations.

\subsection{Pseudorandom function family (PRF)} \label{subsection:prf}

A pseudorandom function family (PRF) is a class of functions whose outputs cannot be distinguished from a random oracle in polynomial time. A random oracle, on query that it has not seen before,  randomly outputs a value sampled from a desired distribution; otherwise (on a previously seen query), outputs the same value for the previously seen query. Many practical constructions, for example, HMAC with SHA, are  assumed to be PRF.   

In the context of LDMs, one could employ PRF to generate the seeds that are be fed into the $\mathcal{D}_0$ sampler $G$ to obtain the starting points.  By properties of PRF, the composite of the chosen PRF and  $G$ emulates the  random oracle that follows $\mathcal{D}_0$. Let us denote the class of functions, $f_i:\{0,1\}^{n_{\tt k}} \rightarrow \{0,1\}^{n_{\tt seed}}$ to be the PRF, where $i\in \mathcal{I}$ for a set of identities  $\mathcal{I}$,  $n_{\tt k}$ is the bit length of the input and $n_{\tt seed}$ is the bit length of the seed.
% For asymptotic analysis, $n_{\tt k}$ and $n_{\tt seed}$  are some functions depend on  $n_0$.  
Using the above notations and on assumption that $\{ f_i \}_{i\in \mathcal{I}}$ is a PRF,  the composite $(G \circ f_i)$ emulates a random oracle whose output follows  $\mathcal{ D}_0$.

\subsection{Statistical Testing} \label{subsection:statistical_testing}
We briefly review the statistical testing framework, used to determine if a given author (associated with a unique identifier $i$) and associated set of generation parameters $\kappa$  corresponds to a contested object $I$. 

Let $T$ be the similarity score between $\mathcal{L}$, and the object generated corresponding to $i$ and $\kappa$. The PA computes the probability $q$ that a forger, given $m$, $e$ and a randomly sampled starting point from $\mathcal{D}_0$, generates an object whose similarity score (with respect to $I$) is at least $T$. For convenience, let us refer to such a forger as a random forger, and $q$ as the random forger's success rate. 

% Let $\alpha \in (0,1)$ be a predetermined random forger success rate. 
% If $q \leq \alpha$, we can reject the notion that $I$ is generated by a random forger at significance level $\alpha$. If $\alpha$ is small, e.g. $\alpha = 2^{-50}$, this means that it is extremely unlikely for the described random forger succeed in this task. 

\paragraph{Estimation of $q$}
In practice, to determine if a random forger is able to achieve a similarity score of $T$ with a probability of at most $q$, 
a naive empirical approach requires at least $\frac{1}{q}$ samples, which is expensive for small $q$. Instead, we assume that the distribution of the similarity score follows some known distribution, which reduces the problem to that of estimating the distribution parameters and can be achieved with significantly fewer samples. Nonetheless, this introduces estimation errors, which may lead to an under-estimation of $q$. In Appendix \ref{appendix:p_estimate}, we show that to approximate $q$ with error of at most $\alpha$ and
statistical confidence of $1-\delta$ using our approach, it suffices to use $n = \Omega\left(\ln^2 \frac{1}{\alpha} \ln \frac{1}{\delta}\right)$ samples. Henceforth, we denote $\mathcal{C}_{\delta} = (\hat{q}-\alpha, \hat{q}+\alpha)$ as the $(1-\delta)$ confidence interval of $q$ with width $2\alpha$, computed using $n$ independent samples and $\hat{q}$ is the point estimate of $q$.

\subsection{Summary of notations}
Table \ref{table:notation} presents a summary of notations used, including those introduced later. 

\begin{table}[!htbp]
\centering
\begin{tabular}{l p{.8\columnwidth}}
\hline
Symbol & Description \\
\hline
$m$ & Meta-parameters for LDM. \\
$e$ & Embedding. \\
$s$ & Starting point of LDM. \\
$L_m(e,s)$ & LDM generated object. \\
% $\mathcal{L}$ & Latent space. \\
$G(\cdot)$ & Deterministic sampler giving samples of $\mathcal{D}_0$. \\
$\mathcal{D}_0$ & Starting point distribution. \\
$f_i(\cdot)$ & Pseudorandom function for identity $i$. \\
$\mathrm{Sim}(\cdot,\cdot)$ & Similarity function. \\
% $H_0, H_1$ & Null and alternative hypothesis. \\

$T$ & Observed similarity score. \\
$q$ & Probability that a random forger attains similarity score at least $T$. \\
$\hat{q}$ & Point estimate of $q$. \\
$\mathcal{C}_{\delta} $ & $(1-\delta)$ confidence interval of $q$ with width $2\alpha$\\
$\alpha$ & Upper bound of estimation error of $q$ with probability of at least $(1-\delta)$\\
\hline
\end{tabular}
\caption{Summary of notations used.}
\label{table:notation}
\end{table}

\section{Proof-of-Authorship Framework}

 This framework involves several  entities: {\em Author}, {\em Probabilistic adjudicator}, {\em Judge} and {\em forger};  three phrases: {\em setup}, {\em generation} and {\em contention}; and several algorithms: the LDM $L_m( \cdot, \cdot)$,  similarity function ${Sim}(\cdot, \cdot)$,  the class of  cryptographically secure pseudorandom function $f_i$'s and pseudorandom number sampling $G(\cdot)$ as discussed in Section \ref{section:notations}, and the statistical test to be carried out by the PA. 
 % Note that all the algorithms are deterministic although they might take in inputs that are randomly sampled by an entity.

\subsection{Phases} \label{subsection:phases}
\subsubsection{Registration}\ \ \  
During this phase, each author chooses and publishes a unique identifier $i\in \mathcal{I}$ that serves as the index $i$ of $f_i(\cdot)$. 
The purpose of registration is to avoid the ambiguous situation where multiple authors have the same identifier. 

\begin{figure*}
    \centering
    \includegraphics[width=1\linewidth]{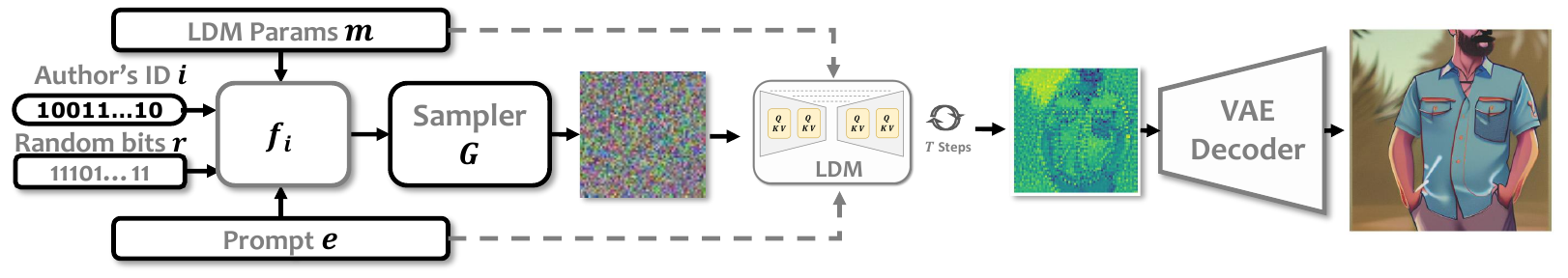}
    \caption{An illustration of the generation phase for the POA $\langle m, e, r \rangle$. A pseudorandom function $f_i$ (e.g. HMAC-SHA3) deterministically maps the POA to a seed that initializes the starting point. The random free bits $r$ enable sampling from different locations, preserving diversity in the generated outputs.}
    \label{fig:poa_generation}
\end{figure*}

\begin{figure*}
    \centering
    \includegraphics[width=1\linewidth]{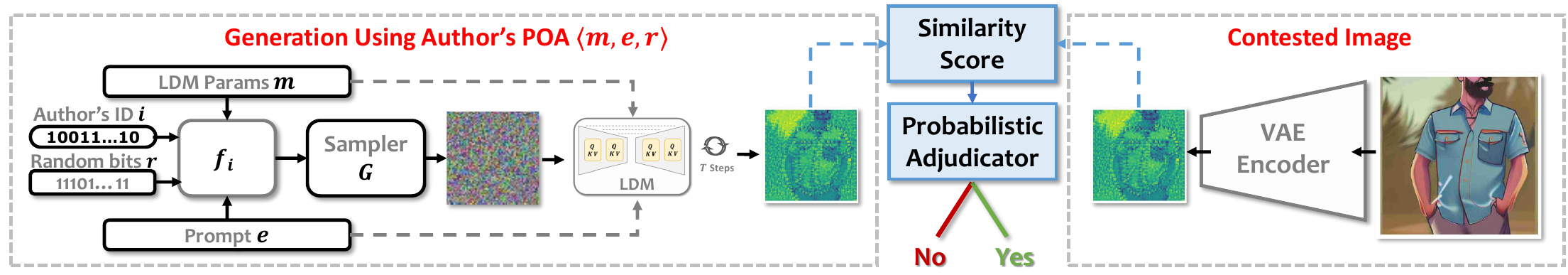}
    \caption{An illustration of the contention phase in our POA framework. The left depicts the VAE latent generated from the POA $\langle m, e, r \rangle$, compared against the VAE latent of the contested image (right). A similarity score is computed from the two latents (Section~\ref{subsection:similarity}) and evaluated by the probabilistic adjudicator to determine whether the POA corresponds to the contested image. }
    \label{fig:poa_contest}
\end{figure*}
\subsubsection{Generation}  \label{subsubsection:generation} \ \ \
An author can experiment with various prompts (which determines the embedding), meta-parameters and starting points. In this framework, the starting points are deterministically obtained from the composition
$ (G\circ f_i) ( \langle m, e, r \rangle )$, 
where $f_i$ is a pseudorandom function (e.g. HMAC-SHA3), $i$ is the author's identity, $m$ is the meta-parameter, $e$ the embedding, and $r$  a randomly sampled binary sequence.   Hence, if $m$ and $e$ are fixed, the author $i$ can randomly sample the starting points by feeding in a random sequence $r$ to generate multiple objects. An illustration of this generation process is given in Figure \ref{fig:poa_generation}.

The parameters $m, e, r$ of an object $J$  would be archived for further lookup. As it is not necessary to keep $m, e, r$ as secret, if the author chooses to, they can be pushed to public domains or distributed ledgers. Henceforth, we will use  $\kappa=\langle m, e ,r \rangle$ to denote the generation parameters of an object.

\subsubsection{Contention}\ \ \ \label{subsubsection:PA}
This situation arises when an author $i$ wants to claim authorship of an object $I$.   For instance,  the object $I$ is an  image that appears in the public domain and the author wants to convince others that $I$ is a slightly distorted and transformed copy of an original image $J$. In particular, the author knows $m, e$ and $r$ such that $J= L_m ( e,  (G \circ f_i)  (\langle m, e, r\rangle  ))$.  Let us call $J$ the {\em original} object and call $I$ the {\em contested} object. 
To support the claims, the following is carried out:

\begin{enumerate}
\item The author passes the following to  the { probabilistic adjudicator}: contested object $I$,  the author's identity $i$,  the generation parameters $\kappa =\langle m, e,r\rangle$, a desired error bound $\alpha$, statistical confidence $(1-\delta)$, and a transformation $t$ which aligns the contested and original objects.
 The PA carries out the statistical test outputs $\hat{q}$. The parameter $\alpha$ and $q$ represent the range $\mathcal{C}_\delta = (\hat{q}-\alpha, \hat{q}+\alpha)$, which is the $(1-\delta)$ confidence interval of $q$. 
% where the randomness is over the starting point $s$ sampled from a distribution $\mathcal{D}_0$, and $p$ satisfies
% $
% p = \IP[Sim  ( t(L_{{m}} ( {e}, s )), I ) \geq T]
% $
% If the PA unable to find a $W< xx$, then the PA output ``un-attainable''.
 An illustration of the contention phase is given in Figure \ref{fig:poa_contest}.

\item The PA passes all the information above to the judge, including its inputs, $\hat{q}$ and $\mathcal{C}_\delta$
\end{enumerate}

\paragraph{Judge}
Using the information provided, the judge outputs an assertion. For instance, if the upper quantile of $\mathcal{C}_\delta$ is small (e.g. $2^{-50}$), it is unlikely that a forger can find such a similar object. The judge may also use other information, e.g. validity of the transformation or other contextual factors.

Section \ref{subsection:PA_spec} describes the PA and judge in greater detail. 

\subsubsection{Forger}
A forger intends to falsely claim authorship of a point of contention $I$.  The object $I$ could be generated from LDM or objects from other non-AIGC source (e.g. digital image taken using a camera). In cases where the object is LDM generated, let us assume that the forger knows all the parameters, i.e., the original author's identity $i$ and corresponding   generation parameters $\kappa=\langle m,e,r \rangle$ that generates $I$.

Given $I$, the forger wants to find a $\langle \widetilde{m}, \widetilde{e}, \widetilde{r}\rangle$ s.t. the PA outputs a small $\hat{q}$.
Since the forger knows $m$ and $e$, the forger could try multiple starting points with $m$ and $e$ to find an object similar to $I$.  Alternatively, the forger could try with different meta-information and
embedding.

% \subsubsection{Remarks} \label{subsubsection:remarks}
% \begin{enumerate}
% \item {\em Verifiable PA. \ \ }  Since the assertion by PA can be publicly verified, it does not need to be a separate entity.  Nevertheless, we prefer to take it as a separate entity to anticipate contentions when an author wants to consider other similarity function, for e.g. a new lossy compression or transformation that was not considered before, or a different metric.  In addition,  certain similarity measures are expensive to compute and it is more practical to consider some fast approximation. In such cases, an independent PA can provide expert opinion on the admissibility of various alternatives.

% \item {\em Witness of the estimate $p$. \ \ \ } The optional witness consists of another seed $s_0$ that generates a list of cryptographic secure pseudorandom starting point $s_1, \ldots s_{n'}$  and a list of real values $a_1, \ldots, a_{n'}$ for some $n'$,  where each $a_j$ is the PA's estimate of the ratio
% $$  a_j \approx \frac{\mbox{Sim}  ( L_{\widetilde{m}} ( \widetilde{e}, s_j ), I )}
% {\mbox{Sim} ( J, I)}.
% $$
% To  verify the PA's assertion, a public can choose any index and verify that the corresponding ratio is  accurately approximated.  Although not within scope of our security formulation,   the seed $s_0$ can prevent a scenario where a malicious PA only selects starting points that are favorable to an entity. 

% \end{enumerate}

% Both of these issues are addressed in detail in our concrete specification of the PA in Section~\ref{subsection:PA_spec}.

\subsection{Security Assumptions}
\label{sec:assumption}
The PA relies on two assumptions, an assumption on the existence of PRF, and  statistical properties of the LDM.

\begin{enumerate}
\item[(A1)]
{\em Hardness of $f_i$'s. }   We assume that $\{ f_i \}_{i\in \mathcal{I}}$ is a pseudorandom function as discussed in Section \ref{subsection:prf}.  In our implementation, this is realized by a HMAC with SHA3-256.

\item[(A2)]
{\em Diffusion property on LDM. }  We assume that, given any $i_0$, $m$, $e$, the distribution $Sim(i_0, L_m (e, S))$ where $S$ is sampled from a multivariate Guassian $\mathcal{N}(0,I)$, 
follows a sub-exponential probability distribution function. %non-central Chi-square distribution. 

\end{enumerate}
The above assumptions are crucial. By the assumption on computation hardness of PRF (Assumption A1), a forger cannot perform better than the random forger who randomly searches for an alternative set of POA which corresponds to the image in question (Section \ref{sec:securityanalysis}).

Leveraging Assumption A2, we can bound the tail probability of $Sim(i_0, L_m(e, S))$ by estimating the parameters of the underlying sub-exponential distribution. Without this assumption,  bounding this tail probability at the $(1-q)$-th quantile with high statistical confidence would require at least $\frac{1}{q}$ samples, which is prohibitively expensive for small $q$, e.g., $q = 2^{-30}$. 
In contrast, under Assumption A2, this tail probability can be estimated using a small number of samples $S_1,\dots,S_n \sim \mathcal{D}_0$ by evaluating the similarity scores $Sim(i_0, L_m(e,S_1)), \dots, Sim(i_0, L_m(e,S_n))$. We discuss this further in Appendix \ref{appendix:p_estimate}.
% Although these parameters could, in principle, be derived from the theoretical properties of the VAE latent distribution (see Section~\ref{subsubsection:vae}), the practical generative pipeline may deviate from the idealized model. We therefore adopt a sampling-based approach.

Regarding Assumption A1, there exists classes of functions that are widely accepted to be PRF and satisfy this assumption. 

Regarding Assumption A2, suppose the VAE produces a latent representation consisting of $d$ independent and identically distributed random variables where $d$ is large (e.g. $d=4\times 64 \times 64$ in SD2.1). Then, by Lemma \ref{lemma:subexp}, Assumption A2 is satisfied. 

Empirically, however, latent variables learned by standard VAEs typically exhibit mild statistical dependence, resulting in small correlations across latent dimensions \cite{kim2019disentanglingfactorising, chen2019isolatingsourcesdisentanglementvariational}. By assuming such weak dependence, sums of $d$ weakly dependent random variables admit a central-limit–type Gaussian approximation under weak dependence (see Theorem 8.2.8 of \cite{durrett2019probability}). Moreover, since each latent component is sub-exponential, their sum remains sub-exponential, and therefore, still satisfies Assumption A2. 
% There are choices of functions that are widely accepted to be pseudorandom functions and satisfy Assumption (A1). In Lemma \ref{lemma:subexp}, we show that under the VAE construction (described in Section~\ref{subsubsection:vae}), our proposed similarity function $Sim(i_0, L_m(e,s))$, where $s\sim \mathcal{D}_0$ is the average of $d$ independently and identically distributed random variables with mean 0 and variance 1. For large $d$ (e.g. $d = 4 \times 64 \times 64$ in SD 2.1), this converges in distribution to $\mathcal{N}\left(0, \frac{1}{d}\right)$, which is subexponential. Empirically, latent variables learned by standard VAEs have been observed to exhibit mild statistical dependence, with non-zero but relatively small correlations across latent dimensions \cite{kim2019disentanglingfactorising, chen2019isolatingsourcesdisentanglementvariational}. Consequently, sums of 
% $d$ weakly dependent random variables satisfy a central-limit–type Gaussian approximation under weak dependence (see Theorem 8.2.8 of \cite{durrett2019probability}). Moreover, since each latent component is sub-exponential, their sum remains sub-exponential. Hence, (A2) is a mild assumption. 

\subsection{Choice of  $G \circ f_i$}  \label{subsection:gf}
We choose HMAC SHA3-256 as our choice of PRF, which is widely assumed to meet the PRF requirement. For the sampler $G$, we adopt a pseudorandom number generator in the PyTorch library, that on an input seed, outputs pseudorandom samples following the $\mathcal{N}(0,I)$ distribution. Our security analysis requires the property that, if $s$ is computationally indistinguishable from a truly random $s'$, then $G(s)$ is computationally indistinguishable from $G(s')$. Since this property holds for any deterministic $G(\cdot)$, we do not need to impose $G$ to be a cryptographically secure pseudorandom number generator. 

% We choose an HMAC with SHA3-256 as $f_i$'s.  The  tuple $\langle m, e, r\rangle$  is represented using string concatenation of $m$, $e$ and $r$ with special symbols as deliminator.   For the Gaussian sampler $G$, we adopt functions in Python's PyTorch library, that maps a seed to the $\mathcal{N}(0, I)$ distribution. HMAC SHA3-256 is widely assumed to be a PRF, and we shall assume so and thus, Assumption (A1) holds for $f_i$. We do not impose any further assumptions on $G$: If $s$ is computationally indistinguishable from a truly random sequence $\hat{s}$, then $G(s)$ is also computationally indistinguishable from $G(\hat{s})$.

\subsection{Probabilistic Adjudicator Algorithm} \label{subsection:PA_spec}

As mentioned in Section \ref{subsubsection:PA}, the input to PA are:  Contested object $I$, transformation $t$, generation parameters $\kappa = \langle m,e,r\rangle$, author's ID $i$, upper bound of estimation error $\alpha \in (0,1)$,  and statistical confidence  $(1-\delta) \in (0,1)$. The PA performs the following steps. 
% \paragraph{\textit{Step 1: Setup}} 

\paragraph{{Step 0: Determining The Number of Samples Needed}}
First, the PA
generates the original $J = L_m(e, G(f_i(\kappa)))$. Recall that $W$ is the distribution of $ L_m(e, s)$ where $s \sim \mathcal{D}_0$. 
To estimate $W$, the PA computes $n = \ln^{2} \frac{1}{\alpha} \ln \frac{1}{\delta}$, which is the number of Monte-carlo samples needed so that the estimation error of $\hat{q}$ is at most $\alpha$ with confidence of at least $(1-\delta)$. Section \ref{subsubsection:n} gives details on the derivation of $n$.

 \paragraph{{Step 1: Estimating the Similarity Score Distribution}}
 The PA 
 independently samples $s_1, \dots, s_{n} \sim \mathcal{D}_0$, and generates $J_1, \dots, J_{n}$, where $J_j = L_m(e, s_j)$. Subsequently, the PA approximates the empirical distribution $\widehat{W}$ of the Monte-Carlo samples $\{Sim(J, t(J_j)):j=1, \dots, n\}$, by computing the maximum likelihood estimate of the underlying distribution parameters over these samples. We use the implementation provided by Python’s {scipy} package to perform this estimate.

\paragraph{Remark}
    To ensure reproducibility and prevent the PA from biasing the outcome by selecting randomness favorable to a forger, the PA derives its randomness (i.e. the pseudorandom seed) from a secure deterministic one-way hash (e.g. SHA-256) of its inputs. 

% \textbf{Verifiable PA.} To ensure reproducibility and auditability, the computes a deterministic seed $H(\kappa)$, where $H(\cdot)$ is a pseudorandom generator (PRG). This seed will be used as the source of randomness. 

\paragraph{{Step 2: Computing $\mathcal{C}_{\delta}$}} The PA estimates $\hat{q} = \IP[\widehat{W} \geq Sim(J, t(I))]$. With probability of at least $1-\delta$, the error of $\hat{q}$ is at most $\alpha $. Accordingly,  $\mathcal{C}_{\delta} = \left( \hat{q} - \alpha, \hat{q} + \alpha\right)$ is a conservative $(1-\delta)$ confidence interval of $q$. 

% Next, the assertion threshold $W_\alpha^I$ is computed as defined in (\ref{eqn:W}), which corresponds to the $1-\alpha$ quantile of the similarity score distribution $W$. Since $\hat{W}$ is an empirical approximation of $W$, we estimate  $W_\alpha^I$ by constructing $\mathcal{C}_{\delta, W_\alpha^I}$, and take its upper quantile. This ensures that the resulting random attacker's success rate is at most $\alpha$ with high statistical confidence. 
% Without Assumption A2, a large $n$ is necessary to estimate $W_\alpha^I$ for small $\alpha$. In Appendix \ref{apppendix:estimate_quantile}, we give a method to estimate $\mathcal{C}_{\delta, W_\alpha^I}$, where the length of $\mathcal{C}_{\delta, W_\alpha^I}$ scales with $O\left(1/{\sqrt{n}}\right)$. 

\paragraph{{Step 3: Output}}
The PA passes all information above to the judge, including its inputs, $\hat{q}$ and $\mathcal{C}_\delta$. 

\paragraph{Judge's Assertion}
The judge uses the information provided by the PA  to evaluate the assertion whether the contested object $I$ originates from the original $J$, generated using parameters which are bound to the author. Let $q'$ be the upper bound of $\mathcal{C}_\delta$. In our evaluations in Section \ref{section:empirical}, we simulate a judge by accepting the assertion if $q'$ is very small (e.g. $2^{-50})$, making it exceedingly unlikely that a random forger could succeed. 

\paragraph{Permissible Transformations}
A forger may attempt to evade detection by applying mild, perceptually preserving transformations (e.g., affine transformations like rotation and translation) to the contested object. Accordingly, the the author specifies the transformation $t$, which may include the identity transformation. The judge may reject the assertion if the transformation is deemed unreasonable (e.g., if it corresponds to only a small subimage of the original).

% \paragraph{Verifiable Pseudorandom PA}
% The source of randomness used by the PA in the Monte-Carlo samples is deterministically derived by passing the POA 
% $\kappa$ through a pseudorandom generator $H$. As a result, the PA cannot collude with an adversary by selecting a favorable random seed. In practice, $H$ can be realized using a cryptographically secure hash function (e.g. SHA256). Although this effectively makes the PA 

% \paragraph{Failure Mitigation} A failure scenario occurs when a non-negligible proportion of starting points drawn from $\mathcal{D}_0$, under fixed $e$ and $m$, map to similar objects. A more detailed exposition is given in the Security Analysis (Section \ref{subsubsection:fail}). In this scenario, the author might not be able to assert authorship with high statistical confidence. 
% To mitigate this failure scenario, the author can simulate Step~2 of the PA and analyze the resulting distribution of $W$ to verify whether Assumption~(A2) is violated.

\paragraph{Correctness and Optimality of PA}
The PA described computes the test statistic for the hypothesis test formalized in Theorem~\ref{theorem:hypothesis}. Since this test is uniformly most powerful, the PA’s assertion is statistically optimal for distinguishing objects corresponding to a given author and generation parameters $\kappa$, achieving the theoretically minimal false-negative rate. 
% To ensure practical reliability, the false-positive rate $\alpha$ should be set sufficiently small.

\subsubsection{Derivation of $n$ and $\mathcal{C}_{\delta}$} \label{subsubsection:n}
Leveraging on Assumption A2, we  prove in Theorem \ref{thm:p_min_samples} in Appendix \ref{appendix:p_estimate} that to bound our estimation error of $q$ by $\alpha$, it suffices to use $n = \Omega\left( \ln^2 \frac{1}{\alpha} \ln \frac{1}{\delta}\right)$ samples. This result is stated below: 
\newcounter{theorembackup}
% Save current theorem number
\setcounter{theorembackup}{\value{theorem}}
\setcounter{theorem}{\getrefnumber{thm:p_min_samples}-1}
\begin{theorem}
Under Assumption~A2, and standard MLE regularity assumptions,  let \( \alpha \in (0,1) \) and \( \delta \in (0,1) \) and consider a fixed threshold \(T\). Let
\( \hat{q} = \Pr[{\widehat{W}} \ge T] \),  where $\widehat{W}$ is estimated over $n$ independently and identically distributed samples drawn from a distribution $W$. If $
n \in \Omega\left(
\ln^{2}\frac{1}{\alpha}\ln\frac{1}{\delta}
\right)
$, then $\hat{q}$ estimates the true probability $q=\IP[W \geq T]$
within an additive error of \( \alpha \), with probability at least \( 1-\delta \).
\end{theorem}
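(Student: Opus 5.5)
We will prove the theorem by working in a parametric family under Assumption~A2. We write the law of $W$ as $F_\theta$, with tail function $\bar F_\theta(T)=\IP_\theta[W\ge T]$, where $\theta\in\Theta\subset\mathbb{R}^k$ collects the finitely many parameters (e.g.\ location, scale, rate). The estimate is then $\hat q=\bar F_{\hat\theta_n}(T)$ with $\hat\theta_n$ the MLE from the $n$ samples. The argument has two stages. First, a concentration statement for $\hat\theta_n$. Second, a Lipschitz-type bound transferring the parameter error to the error of the tail probability. For the first stage we use the standard MLE regularity assumptions (identifiability, a compact neighbourhood of $\theta_0$, finite and nondegenerate Fisher information $\mathcal{I}(\theta_0)$, a score that is smooth and sub-exponential). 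Under these, $\hat\theta_n-\theta_0$ is, up to a lower-order remainder, the average $\mathcal{I}(\theta_0)^{-1}\frac1n\sum_j \nabla\log f_{\theta_0}(W_j)$. Each summand is sub-exponential because $W$ is (Assumption~A2). Bernstein's inequality for sub-exponential variables, together with a union bound over the $k$ coordinates, then gives
\[
\IP\left[\|\hat\theta_n-\theta_0\|\ge \epsilon\right]\le 2k\exp\left(-c\,n\min(\epsilon^2,\epsilon)\right),
\]
so $\|\hat\theta_n-\theta_0\|\le \epsilon$ with probability $1-\delta$ once $n\gtrsim \epsilon^{-2}\ln\frac{k}{\delta}$.

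For the second stage, the mean value theorem gives $|\hat q-q|\le \sup_{\theta}\|\nabla_\theta \bar F_\theta(T)\|\,\|\hat\theta_n-\theta_0\|$, with the supremum over the $\epsilon$-ball. We must therefore control the sensitivity of the tail to the parameters. For a sub-exponential family, $\bar F_\theta(T)\approx C\exp(-\lambda(\theta)(T-\mu(\theta)))$. Differentiating in $\theta$ multiplies the tail by a factor of order $|T-\mu|$, which is $O(\ln\frac1q)$. The regime that matters is the one where the additive error $\alpha$ is comparable to $q$. Writing $\epsilon$ as the parameter error, we need the relative error $\epsilon\ln\frac1\alpha = O(1)$, i.e.\ $\epsilon\asymp 1/\ln\frac1\alpha$. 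Substituting into stage one yields $n\gtrsim \ln^2\frac1\alpha\,\ln\frac1\delta$, matching the statement.

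The main obstacle is making the second stage uniform in $T$. For thresholds with $q\gg\alpha$ the tail is not exponentially small, and the derivative bound $|T-\mu|\,\bar F_\theta(T)$ must be shown to be at most $\alpha$ times a constant after scaling. We would handle this with the elementary bound $\sup_{x\ge 0} x e^{-x}\le e^{-1}$, which controls the case where $T$ is moderate. In the case where $T$ is large, $\bar F(T)\le \alpha$ and we use the crude bound $|\hat q-q|\le \max(\hat q,q)$, which after the parameter concentration is $O(\alpha)$. A secondary subtlety is the MLE remainder term. We would absorb it by restricting to $n$ beyond the fixed constant required by the regularity conditions, and this constant is hidden in the $\Omega(\cdot)$.
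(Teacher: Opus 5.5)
Your main line matches the paper's proof. Both arguments bound the tail sensitivity $\|\nabla_\theta \bar F_\theta(T)\|$ by a constant times $q\ln\frac1q$, bound the parameter error by $\sqrt{\ln(1/\delta)/n}$, and then require parameter accuracy $\asymp 1/\ln\frac1\alpha$. The paper gets the parameter error by treating $\sqrt n(\hat\theta-\theta)$ as exactly $\mathcal{N}(0,\mathcal{I}(\theta)^{-1})$ and reading off a $\chi^2_3$ quantile via Laurent--Massart. Your Bernstein-plus-linearization stage is a non-asymptotic replacement for that step, gives the same rate, and is the more defensible of the two.

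The gap is in your ``uniform in $T$'' paragraph. When $q\gg\alpha$, the bound $\sup_{x\ge 0}xe^{-x}\le e^{-1}$ only gives $q\ln\frac1q\le e^{-1}$. Your mean-value estimate then yields $|\hat q-q|\lesssim\epsilon\asymp 1/\ln\frac1\alpha$, not $O(\alpha)$; for $\alpha=2^{-50}$ that is an error of order $10^{-2}$.

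No other argument can rescue this case. For a threshold in the bulk (say $q=\tfrac12$), $\theta\mapsto q(\theta)$ has gradient of order one. A standard Cram\'er--Rao or two-point (Le Cam) lower bound then shows any estimator of $q$ has error of order $1/\sqrt n$, so additive accuracy $\alpha$ needs $n\gtrsim\alpha^{-2}$. The statement therefore cannot hold uniformly in $T$ with polylogarithmic $n$. It must be read in the regime $q\lesssim\alpha$, or equivalently as a relative-error guarantee $\Delta q\le c\,q$ at $q\approx\alpha$. That is exactly what the paper's proof establishes when it substitutes $\alpha\ln\frac1\alpha$ for $q\ln\frac1q$ in its last step, and what Appendix~\ref{appendix:p_estimate} phrases as $\Delta q\le q/5$.

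In that regime you need no case split. Since $x\mapsto x\ln\frac1x$ is increasing on $(0,e^{-1})$, $q\le\alpha$ gives $q\ln\frac1q\le\alpha\ln\frac1\alpha$ directly. The supremum over the $\epsilon$-ball stays controlled because $\bar F_{\theta'}(T)\le q^{1-O(\epsilon)}\le e^{O(1)}\alpha$ there.

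As a side remark, your tail model $Ce^{-\lambda(T-\mu)}$ has no shape parameter. If you use the generalized normal family the paper fits, with $\beta$ estimated, $\partial_\beta$ contributes an extra factor $\ln\ln\frac1q$. That costs an additional $(\ln\ln\frac1\alpha)^2$ in $n$, which the paper's claim $\partial_\theta\ln f\in O(X^\beta)$ also overlooks.
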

\setcounter{theorem}{\value{theorembackup}}

\begin{table}[t]
\centering
\small
\begin{tabular}{|c|c|c|c|}
\hline
\textbf{} &
\textbf{(I) Naive} &
\textbf{(II) Assumption A2} &
\textbf{Ratio of} \\
\textbf{\(\alpha\)} &
\textbf{Sampling } &
\textbf{(\(n=\ln^{2}(1/\alpha)\ln(1/\delta)\))} &
\textbf{I:II} \\
\hline
\(2^{-10}\) & \(\geq 2^{10}\) & \(\approx 2^{8}\) & \(2^2 \) \\
\hline
\(2^{-30}\) & \( \geq 2^{30}\) & \(\approx 2^{12}\) & \(2^{18}\) \\
\hline
\(2^{-50}\) & \(\geq 2^{50}\) & \(\approx 2^{13}\) & \(2^{37}\) \\
\hline
\end{tabular}
\caption{Comparison of sample complexity for determining whether \(q \le \alpha\) with confidence \(1-\delta\), where \(\delta = 0.001\). Leveraging Assumption A2 reduces the required number of samples from exponential in \(1/\alpha\) to polylogarithmic.}
\label{table:sample_complexity_comparison}
\end{table}
 Table~\ref{table:sample_complexity_comparison} reports the number of samples required when Assumption~A2 is employed, versus naive sampling.

% \begin{remark}[\emph{Optional Judge}]
% Instead of making a hard classification decision, the PA can forward its output to a judge. The judge may interpret the $p$-value as a quantitative measure of the likelihood that the two objects share the same POA, and incorporate it as part of the overall decision-making process (as illustrated in Figure~\ref{fig:poa_contest}). 
% % For example, if the $p$-value is slightly larger than the cutoff threshold $\alpha$, the judge may also consider the validity of the transformation applied by the PA, as well as additional contextual evidence (e.g. timestamps) submitted by the author.
% \end{remark}

\subsection{Security Analysis}
\label{sec:securityanalysis}

To recap, suppose an image is generated by an author with identity $i$ and generation parameters $\langle m, e, r\rangle$. The forger’s goal is to (1) generate a similar image that can be attributed to the forger’s own identity and generation parameters, or (2) transform the image  into a modified version which is similar to the original but cannot be claimed by the author. 

In our security analysis, we show that no polynomial-time forger can achieve more than a negligible advantage for the first goal over an adversary limited to random guessing (i.e., a random forger).  The second goal corresponds to a \textit{removal attack}, which is addressed in our subsequent analysis of the similarity function.

\subsubsection{Threat Model} \label{sec:threatmodel}

Given a publicly released set of generation parameters 
$\kappa = \langle m, e, r \rangle$ associated with an author $i$ used to generate an object $I$, we consider the following classes of adversaries:

\begin{enumerate}
    \item \textit{Random Forger} (Section~\ref{sec:RandomForger}): A random forger, 
    conditional on the same $m$, $e$ and the forger's own identity, attempts to find 
    an $r'$ such that the forged object is similar to the original.
    
    \item \textit{Malicious Forger} (Section~\ref{subsubsection:forger_attack_strategy}): 
    Unlike the random forger, the malicious forger employs a probabilistic polynomial time algorithm to derive generation parameters 
    $\widetilde{\kappa}$ such that, conditional on $\widetilde{\kappa}$ and the forger's identity, 
    the forged object is similar to the original.
    
    \item \textit{Malicious Model Provider} (Section~\ref{subsubsection:fail}):  The malicious model provider attempts to violate Assumption A2, with the goal of introducing ambiguity in the verification process. For example, the model is constructed
a non-negligible proportion of starting points sampled from $\mathcal{D}_0$ 
map to similar generated objects, which results in several entities plausibly claiming authorship.

\item \textit{Colluding PA} (Section~\ref{subsection:PA_spec}): A malicious PA may deliberately manipulate its pseudo-randomness to favor the adversary.

\end{enumerate}

Additionally, in Section~\ref{subsection:adv_attack}, we consider a {removal attack}, in which an adversary seeks to evade detection by manipulating the similarity function, so that the contested and original objects remain close under a given distance metric, but their similarity score is reduced. 

\subsubsection{Random Forger}
\label{sec:RandomForger}

Let us write 
\begin{equation} \label{eqn:S}
    S^{m,e,I}_j(r) \stackrel{\text{def}}{=} \mbox{Sim} ( L_m ( e, (G \circ f_j) (\langle m, e, r\rangle) ), I),
\end{equation}  
which denotes the output of the similarity function over an image $I$ and $L_m ( e, (G \circ f_j) (\langle m, e, r\rangle) )$. $S^{m,e,I}_j( \cdot)$ follows some distribution  when $r$ is randomly chosen. 
Definition \ref{definition:adv_forger} formally defines the forger's advantage:
\begin{definition} \label{definition:adv_forger}
    Given generation parameters $\kappa = \langle m,e,r\rangle$ and generated object $I = L_m(e, f_i(\kappa))$, let $S^{m,e,I}_j(r)$ be as defined in (\ref{eqn:S}). Given a forger $\phi(I, \kappa)$, which takes as input $I$ and $\kappa$, let $\widetilde{\kappa} = \langle \widetilde{m}, \widetilde{e}, \widetilde{r}\rangle \leftarrow \phi(I, \kappa) $   denote the forged generation parameters.  The advantage of $\phi$ w.r.t. $I$, $\kappa$, is written as: 
    \[
    Adv_\phi(I, \kappa) = \IP_{r' \leftarrow Unif()} [S_j^{\widetilde{m}, \widetilde{e}, I} (\widetilde{r}) - S_i^{m,e,I}(r') > 0] - \frac{1}{2} ,
    \]
    where the probability is over a uniformly sampled $r'$, $\widetilde{r}$ is provided by the forger $\phi$, and  $j$ corresponds to the forger's identity,
    % Let random variables $\mathcal{A}_{rand}(S^{m,e,I}_i)$ and $\mathcal{A}_{\phi}(S^{m,e,I}_i)$ denote the similarity scores achieved by randomly sampling $r$ and by a polynomial-time forger $\phi$, respectively. For a fixed $\epsilon > 0$, $\phi$ advantage over the random baseline is given as:
    % \[
    % Adv(\mathcal{A}_{\phi}(S^{m,e,I}_i), \epsilon) = \IP[\mathcal{A}_{\phi}(S^{m,e,I}_i) - \mathcal{A}_{rand}(S^{m,e,I}_i) > \epsilon].
    % \]
\end{definition}
$Adv_\phi(I, \kappa)$ quantifies the extent that the forger $\phi$ outperforms a random forger. 
We show in Section \ref{subsubsection:impossibility} that no forger can outperform a random forger by a non-negligible advantage.
% Under the PRF assumption on $f_i$, for all $\alpha > 0$, this advantage is negligible in $n$ (the output length of $f_i$).

\subsubsection{Forger's Attack Goal }   \label{subsubsection:forger_attack_strategy}

Recall that, given an image $I$ (generated by author $i$ and generation parameters $\langle m, e, r\rangle)$, a forger with identity $j$ aims to produce a generation parameters 
$\langle \widetilde{m}, \widetilde{e}, \widetilde{r} \rangle$, such that the resulting generated image
$
J = L_{\widetilde{m}}\bigl(\widetilde{e}, (G \circ f_j)(\langle \widetilde{m}, \widetilde{e}, \widetilde{r} \rangle)\bigr)
$
is similar to $I$ but corresponds to the forger's identity and generation parameters. If this occurs, an \textit{ambiguous scenario} arises: both the original author and the forger possess valid
(identity, generation parameters) pairs, namely $(i, \langle m, e, r\rangle)$ and
$(j, \langle \widetilde{m}, \widetilde{e}, \widetilde{r} \rangle)$, respectively, which correspond to similar objects $I$ and $J$. Let us call this forger a malicious forger.

Empirically, if an adaptive forger uses a different embedding $\tilde{e}$ and meta-parameter $\tilde{m}$ compared to the original $e$ and $m$, the forged object $J$, while possibly semantically similar, will be significantly different compared $I$ (see \cite{shen2024promptstealingattackstexttoimage} for the empirical results). Since the forger already knows the original $m$ and $e$. it would be more efficient for the  forger to use them during the search. That is, the forger explores
$
J = L_{m}\bigl(e, (G \circ f_j)(\langle m,e, \widetilde{r} \rangle)\bigr),
$
i.e. objects corresponding to $I$'s original embedding $e$ and meta-parameters $m$, while varying the randomly sampled initialization point controlled by $\tilde{r}$. 

\subsubsection{Impossibility of a Polynomial-time Forger} \label{subsubsection:impossibility}

Under the PRF Assumption A1, no probabilistic polynomial-time  (PPT) forger can achieve a non-negligible advantage over a random forger (with respect to the security parameter of the PRF family). This result is formally stated in Theorem~\ref{thm:poa_ppt}. The full proof is deferred to Appendix \ref{appendix:poa_ppt}, and we present a sketch here. 
\begin{theorem} \label{thm:poa_ppt}
    Under the PRF assumption, for all PPT forger $\phi$, $I$,  $Adv_\phi(I, \kappa) \leq negl(\lambda)$, where $\lambda$ is the security parameter of the PRF family. 
\end{theorem}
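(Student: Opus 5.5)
The proof is a standard hybrid reduction to Assumption~A1. We compare the real experiment, in which the forger's candidate object is computed with the pseudorandom $f_j$, against an ideal experiment in which $f_j$ is replaced by a truly random function $R$ (a random oracle whose composition with $G$ outputs samples of $\mathcal{D}_0$).

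\textbf{Ideal experiment.} Here the forger's output $\widetilde{\kappa}=\langle \widetilde m,\widetilde e,\widetilde r\rangle$ is a string chosen by $\phi$ with no information about $R(\widetilde\kappa)$. The exception is if $\phi$ has already queried that point. However, $\phi$ sees only $I$ and $\kappa$, and the author's value $f_i(\kappa)$ is produced by a different key. So the starting point $G(R(\widetilde\kappa))$ is a fresh draw from $\mathcal{D}_0$, independent of everything $\phi$ knows.

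\textbf{Comparing the two scores.} In the natural case $\widetilde m=m$, $\widetilde e=e$ emphasized in Section~\ref{subsubsection:forger_attack_strategy}, the score $S_j^{m,e,I}(\widetilde r)$ has the same distribution as a random-forger score. The comparison score $S_i^{m,e,I}(r')$ for uniform $r'$ is also (up to the PRF hybrid for $f_i$) a fresh draw from the same distribution. Two i.i.d.\ scores give $\IP[X-Y>0]\le \frac12$ by symmetry, with equality exactly when ties have probability zero. So the ideal advantage is at most $0$. For general $(\widetilde m,\widetilde e)$, I interpret ``advantage over random guessing'' as comparison against a random forger using the same $(\widetilde m,\widetilde e)$. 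The same symmetry argument then applies, since $\widetilde r$ selected by $\phi$ gives a sample identically distributed to one from a uniform $\widetilde r'$.

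\textbf{Reduction to A1.} Suppose $Adv_\phi(I,\kappa)\ge \epsilon(\lambda)$ for non-negligible $\epsilon$. We build a distinguisher $\mathcal{B}$ with oracle access to either $f_j$ or $R$. $\mathcal{B}$ runs $\phi(I,\kappa)$ and queries the oracle at $\widetilde\kappa$ to obtain $s=G(\cdot)$. It computes $X=\mathrm{Sim}(L_{\widetilde m}(\widetilde e,s),I)$. It then samples $r'$ itself and computes $Y=S_i^{m,e,I}(r')$ using the public $f_i$; if the paper intends $f_i$ also to be keyed, use a second hybrid. Finally $\mathcal{B}$ outputs $1$ iff $X>Y$.

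$\mathcal{B}$ is PPT because $\phi$, $L_m$, $G$ and $\mathrm{Sim}$ are polynomial-time. In the real world it outputs $1$ with probability at least $\frac12+\epsilon$; in the ideal world, with probability at most $\frac12$. Hence its distinguishing advantage is at least $\epsilon$, contradicting A1.

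Since $I,\kappa$ are arbitrary, one can hardwire them as non-uniform advice. Alternatively, assume they are sampled by a PPT process, so the reduction stays uniform.

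\textbf{Main obstacle.} The delicate point is making the ideal-world bound rigorous, namely arguing that $\phi$'s adaptive choice of $\widetilde\kappa$ cannot correlate with $R(\widetilde\kappa)$. This needs the observation that $\phi$ has no oracle access to $f_j$ before committing, or, if it does, that each queried point's output is an independent $\mathcal{D}_0$ draw. In the latter case the forger is a random forger with polynomially many tries, and the advantage is measured per attempt. I would handle this by defining the random forger as making the same number of queries, so that the best-of-$k$ distributions coincide in the ideal world. A secondary subtlety is that $G$ only approximates $\mathcal{D}_0$; this is absorbed because both experiments use the same $G$.
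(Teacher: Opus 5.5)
The gap is in the appeal to A1 for $f_j$. In this framework $f_j$ is HMAC keyed by the forger's own published identity $j$. The PA must recompute it, and the scheme is explicitly secret-free, so $\phi$ holds the key and can evaluate $f_j$ offline as often as it likes. PRF security says nothing to a party that knows the key; it concerns a uniformly random hidden key, not a fixed registered $j$. Your $\mathcal{B}$ has only oracle access, so it cannot simulate a $\phi$ that computes $f_j$ itself. The same objection applies to your ``PRF hybrid for $f_i$'', since $i$ is public too.

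Hence the first branch of your main obstacle ($\phi$ has no access to $f_j$ before committing) is unavailable. The second branch, which you correctly anticipate, breaks the bound you need. Even idealizing $G\circ f_j$ as a random oracle, a forger can evaluate $\widetilde r_1,\dots,\widetilde r_k$ and output the best. In your ideal world this gives $\IP[\max_l X_l > Y]=k/(k+1)$ for i.i.d.\ continuous scores. So $Adv_\phi\ge 1/6$ already at $k=2$, and it tends to $\frac12$ for $k=\mathrm{poly}(\lambda)$. ``Ideal advantage $\le 0$'' is therefore false. Redefining the random forger to make the same number of queries proves a different, per-query statement, not $Adv_\phi(I,\kappa)\le negl(\lambda)$.

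The general $(\widetilde m,\widetilde e)$ case has the same defect. $Adv_\phi$ compares against $S_i^{m,e,I}(r')$ at the \emph{original} $(m,e)$, so the two scores are not identically distributed even with $R$ in place of $f_j$. Whether some other $(\widetilde m,\widetilde e)$ gives stochastically larger similarity to $I$ is a property of the LDM, which no PRF or random-oracle argument controls. Changing the comparison class again changes the theorem.

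The paper's proof takes a different route. It challenges the \emph{author's} value $f_i(\langle m,e,r\rangle)$ in Game~\ref{game:prf}, gives $\phi$ the object $L_m(e,b_\sigma)$, and asserts that $\phi$'s success is exactly $\frac12$ when $b_\sigma$ is uniform. That assertion is not derived: replacing the author's seed by a uniform one places no constraint on how well $\phi$ can pick $\widetilde\kappa$ for $f_j$. The game also hands the adversary the key $i$. So the paper does not get past these obstacles either.

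Your hybrid on $f_j$ is the more sensible skeleton. What it actually yields, in the random-oracle model for $G\circ f_j$ and with $(\widetilde m,\widetilde e)=(m,e)$, is weaker: a forger making $k$ evaluations does no better than a random forger with $k+1$ draws. For example, it reaches a threshold $T$ with probability at most $(k+1)q$. That is the security statement the PA's $q$ actually supports. The bound as stated is refuted by the grinding forger above.
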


\begin{proof}[\emph{Proof Sketch}] By the PRF assumption, for any PPT adversary that has not queried $f_i(\kappa)$, $f_i(\kappa)$ is computationally indistinguishable from a truly-random sequence, This holds even if the adversary is allowed a polynomial number of queries to $f_i$ (excluding the input $\kappa$). Equivalently, no PPT adverary has a non-negligible advantage over random guessing in Game \ref{game:prf}. 

We complete the proof by reduction. If a PPT forger $\phi$ with a non-negligible advantage over a random forger exists, one can construct a PPT distinguisher which has a non-negligible advantage over random guessing in Game \ref{game:prf}. The exact construction of this distinguisher is given in the full proof of  Theorem~\ref{thm:poa_ppt}. This contradicts the PRF assumption and hence, no such $\phi$ exists.  
\end{proof}

\subsubsection{Failure Scenario} \label{subsubsection:fail}

% For a fixed significance level $\alpha$, POA $\kappa=\langle m,e,r\rangle$, and contested object $I$, define the set
% \[
% S = \{x \in \mathrm{Supp}(\mathcal{D}_0) : Sim(I, L_m(e,x)) \ge k\},
% \]
% for some reasonably large $k$
% and let $\mathcal{M}$ denote the probability measure induced by $\mathcal{D}_0$. A failure scenario occurs if $\mathcal{M}(S) := \gamma$ is non-negligible. That is, if an adversary samples from $\mathcal{D}_0$ at random, there is a non-negligible probability of drawing an $s \in S$ such that $Sim(I, L_m(e,s)) \ge k$. 

Given a contested object $I$ generated by author $i$ and corresponding generation parameters $\kappa=\langle m,e,r\rangle$, a failure scenario arises if a random forger is able to sample a $s \sim \mathcal{D}_0$ such that $Sim(I, L_m(e, s))$ is large, with non-negligible probability. In other words, even with a randomly chosen starting point, the random forger is able to generate an object which is highly similar to $I$ with a non-negligible success rate. In this failure scenario, Assumption A2 is violated and the distribution of $Sim(I, L_m(e, s))$ is no longer sub-exponential. In particular, the tail probability
$
\IP_{s \leftarrow \mathcal{D}_0} [Sim(I, L_m(e, s)) \geq k]
$
does not decay exponentially in $k$, as there is a non-negligible measure of of starting points $s \sim \mathcal{D}_0$, for which $L_m(e, s)$ outputs an object which is highly similar to $I$.
Such failure scenario may arise, for example, if the LDM is backdoored~\cite{zhai2023texttoimagediffusionmodelseasily} or has been trained on insufficient data, leading to memorization~\cite{gu2025memorizationdiffusionmodels}.

Nonetheless, even in this failure scenario, no PPT adversary has a non-negligible advantage over a random forger. The adversary is still restricted to sampling starting points from the distribution $G(f_j(\widetilde{\kappa}))$, where $j$ and $\widetilde{\kappa}$ denote the forger's identity and forged generation parameters, respectively. By the PRF Assumption A1, $f_j(\widetilde{\kappa})$ is computationally indistinguishable from a uniformly random sequence. Consequently, the adversary is reduced to random sampling from $\mathcal{D}_0$, and thus, has a negligible advantage over a random forger. 

Moreover, if the PA has access to the same computational resources as the forger, its sampling procedure (Step~2) would likewise encounter starting points corresponding to this failure scenario. As a result, the PA would be able to detect a violation of Assumption~A2. As described in Section~\ref{subsection:PA_spec}, to ensure that the authorship of an object can be attested with high statistical confidence, the author may also perform independent verification to detect such failure scenarios.

% A failure scenario occurs when similarity scores exhibit low variance and are highly concentrated. That is, the set
% $$\{Sim(L_m(e, x), L_m(e, y)) : x, y \sim D_0 \}$$ 
% is concentrated within a narrow range of values. Thus, images generated from different starting points are highly similar, making it impossible to reliably distinguish valid POAs from invalid ones.
% Its occurrence would likely trigger investigation and scrutiny of the LDM implementation, e.g. insufficient training data or the model is backdoored. We will explore this further in the Discussion (Section \ref{subsection:failure}). Nonetheless, this does not contradict the security analysis, since the forger still has no advantage over random guessing. 

% In the unlikely event that a forger succeeds in finding a $\tilde{s} =f_j(\langle m, e, \tilde{r}\rangle)$ such that $G(\tilde{s}) \approx G(\langle m, e, r\rangle))$, this  would lead to an ambiguous scenario where two entities  produce seemingly valid assertions on the same object.
% In practice, since this scenario occurs with negligible probability, its occurrence would likely trigger investigation and scrutiny of the LDM implementation. 

\section{Analysis of Similarity Function}

\subsection{Similarity of Generated Objects} \label{subsection:similarity}

\paragraph{Role of the seed}
In our POA framework, the seed is constructed by passing the author's unique identifier and
generation parameters into a PRF $f_i$. 
The seed is then passed into a pseudorandom number generator $G$ to generate an initial random starting point
for the diffusion process according to some distribution (e.g. multivariate Gaussian distribution).  
We can show that two distinct seeds used to generate a random starting point from a 
standard multivariate Gaussian distribution, then with high probability,
normalized squared $\ell_2$
 distance\footnote{Normalized squared $\ell_2$ distance between two vectors of equal lengths $X$ and $Y$ is ${\frac{1}{|X|} \sum_{\gamma \in X-Y} \gamma^2}$} between the two starting points is in the range
$2 \pm 3 \times \sqrt{\frac{8}{d}}$, where $d$ is the number of dimensions of the starting point (in Stable Diffusion, 
$d = 4 \times 64 \times 64$).

\paragraph{Relationship Between Initial Starting Point and Generated Content}
In a typical LDM setup, the initial starting point $s$ follows a multivariate
Gaussian distribution $\mathcal{N}(0, I)$. Henceforth, let $\mathcal{D}_0$ be  $\mathcal{N}(0, I)$. 
In this process, small amounts of noise is subtracted from the current state over $t$ timesteps, resulting in a latent-space representation $\mathcal{L}_t$ (where $\mathcal{L}_0 = s)$.  In particular, given a pair of independently generated starting points $(s, s')$ and their corresponding generated objects $( \mathcal{L}_t,  \mathcal{L}_t')$, we have the following.
\begin{enumerate}
    \item Lemma \ref{lemma:random_latent_rs}: On expectation, the $||\mathcal{L}_t - \mathcal{L}_t'||_2$ is lower bounded by a constant factor of $||s - s'||_2$. That is, if $s$ and $ s'$ are far apart, then $ \mathcal{L}_t$  and $\mathcal{L}_t'$ will also be far apart. 
    \item Lemma \ref{lemma:subexp}: Under typical LDM architecture, the similarity function between $\mathcal{L}_t$ and $\mathcal{L}_t'$, given by $Sim(\mathcal{L}_t, \mathcal{L}_t') :=T = \frac{1}{d} \mathcal{L}_t^T \mathcal{L}_T'$ is closely approximated by a sub-exponential distribution parameterized by the mean and variance parameters described in Assumption A2. 
    \item Theorem \ref{theorem:hypothesis}: Theorem~\ref{theorem:hypothesis} shows that the test statistic $Sim(\mathcal{L}_t, \mathcal{L}_t')$ can be used as an optimal statistical test for determining whether a given object was generated by a particular author under a given identity and generation parameters.
\end{enumerate}
The full proofs are deferred to the appendix; however, we present the precise theorem and lemma statements here to clarify their role in the theoretical justification of our POA framework.

% \begin{assumption}
%     ddim is smooth
% \end{assumption}
% \begin{assumption}
%     Let $PRNG(s)$ be a PRNG used to sample from $\mathcal{N}(0, I)$ conditional on the seed parameter $s$. 
    
%     There does not exist a ppt adversary $\mathcal{A}$ 
% \end{assumption}
% Lemma \ref{lemma:random_latent_rs} shows that, in expectation, the distance between latent vectors $\mathcal{L}_t$ and $\mathcal{L}_t'$, generated over $t$ LDM steps from random starting points $s$ and $s'$, is lower bounded by $f(\alpha_t) \|s - s'\|_2$, where $f(\alpha_t) = \frac{1 - \sqrt{1 - \alpha_t}}{\sqrt{\alpha_t}}$. 
Under Assumption A1, a  forger can do no better than choosing a seed uniformly at random, which corresponds to sampling starting points independently from $\mathcal{D}_0$. Hence, Lemma \ref{lemma:random_latent_rs} shows that objects generated using randomly selected seeds are expected to be far apart in the $\ell_2$ metric.  

\begin{lemma} \label{lemma:random_latent_rs}
    Let $s, s'$ denote two initial starting points sampled independently from $\mathcal{N}(0, I)$, and $\mathcal{L}_t, \mathcal{L}_t'$ denote the resulting latents after performing DDIM denoising on $s, s'$ over $t$ timesteps. The following relationship holds:
    \[
    \IE[||\mathcal{L}_t - \mathcal{L}_t'||_2] \geq f(\alpha_t) \IE[||s - s'||_2],
    \]
    where  $f(\alpha_t) = \frac{1 - \sqrt{1-\alpha_t}}{ \sqrt{\alpha_t}}$ and $\alpha_t>0$ is a small constant from the DDIM scheduler which is decreasing in $t$. 
\end{lemma}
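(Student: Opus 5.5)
We plan to analyse a single deterministic DDIM step and then argue that the bound propagates to the final latent. Recall the DDIM update in the ``predict $x_0$'' form. At a step with cumulative scheduler coefficient $\alpha_t$, write the current latent as $x = \sqrt{\alpha_t}\,x_0 + \sqrt{1-\alpha_t}\,\epsilon_\theta(x)$, where $\epsilon_\theta$ is the noise predictor. The clean-latent estimate is then
\[
\hat{x}_0(x) = \frac{x - \sqrt{1-\alpha_t}\,\epsilon_\theta(x)}{\sqrt{\alpha_t}} .
\]
Applying this map to the two trajectories started at $s$ and $s'$, the difference becomes
\[
\mathcal{L}_t - \mathcal{L}_t' = \frac{(s - s') - \sqrt{1-\alpha_t}\,\bigl(\epsilon_\theta(s) - \epsilon_\theta(s')\bigr)}{\sqrt{\alpha_t}} .
\]
Taking norms and using the reverse triangle inequality gives
\[
\|\mathcal{L}_t - \mathcal{L}_t'\|_2 \ge \frac{\|s-s'\|_2 - \sqrt{1-\alpha_t}\,\|\epsilon_\theta(s)-\epsilon_\theta(s')\|_2}{\sqrt{\alpha_t}} .
\]

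The key step is to control $\IE\|\epsilon_\theta(s)-\epsilon_\theta(s')\|_2$ by $\IE\|s-s'\|_2$. The natural route is to use the design goal of the noise predictor: at high noise levels $\epsilon_\theta(x)$ approximates the noise component of $x$. Under the $\mathcal{N}(0,I)$ prior this is essentially $x$ itself, possibly shrunk toward the data mean. So we would assume, or justify from the training objective, that $\epsilon_\theta$ is non-expansive in expectation:
\[
\IE\|\epsilon_\theta(s)-\epsilon_\theta(s')\|_2 \le \IE\|s-s'\|_2 .
\]
This is the $1$-Lipschitz-on-average property in the spirit of the smoothness results cited in Section~\ref{section:notations}. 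Substituting it into the inequality above and taking expectations yields
\[
\IE\|\mathcal{L}_t-\mathcal{L}_t'\|_2 \ge \frac{1-\sqrt{1-\alpha_t}}{\sqrt{\alpha_t}}\,\IE\|s-s'\|_2 = f(\alpha_t)\,\IE\|s-s'\|_2 ,
\]
which is exactly the stated bound. The factor $f(\alpha_t)$ is nonnegative because $\alpha_t \in (0,1]$, so the bound is meaningful.

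If the lemma is read as referring to the full multi-step trajectory rather than a single step, the plan is to interpret $\alpha_t$ as the cumulative coefficient $\bar\alpha$. DDIM's deterministic composition can be rewritten as one $x_0$-prediction from the initial point with an effective noise estimate. The same two inequalities (reverse triangle inequality, then non-expansiveness of the effective noise estimate) are then applied once more.

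The main obstacle is the non-expansiveness step. It is not a theorem about arbitrary networks, so the proof must state it as a modelling assumption, motivated by the smoothness and Lipschitz literature and by the fact that an optimal denoiser at high noise is close to the identity. The rest of the argument is routine algebra.
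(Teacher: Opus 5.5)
Your proposal is essentially the paper's proof. The paper also starts from DDIM's cumulative relation $s=\sqrt{\alpha_t}\,\mathcal{L}_t+\sqrt{1-\alpha_t}\,\epsilon$ (your cumulative-coefficient reading), applies the triangle inequality to the difference of the two trajectories, and rearranges after bounding $\IE\|\epsilon-\epsilon'\|_2$ by $\IE\|s-s'\|_2$.

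The only difference is how that last bound is justified. The paper asserts that $\epsilon-\epsilon'$, like $s-s'$, is $\mathcal{N}(0,2I)$, so the two expected norms are equal. You instead state the weaker inequality explicitly as a non-expansiveness modelling assumption, which is all the argument needs and is implied by the paper's claim.
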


% \begin{figure}
%     \centering
%     \includegraphics[width=1\linewidth]{fig/alpha_plot.png}
%     \caption{The function $f(\alpha_t) = \frac{1-\sqrt{1-\alpha_t}}{\sqrt{\alpha_t}}$ against $\alpha_t$ for $\alpha_t \in (0,1 )$. For example, for $\alpha_{100} \approx 0.894$, $f(\alpha_t) \approx 0.713$}
%     \label{fig:alpha_t_plot}
% \end{figure}

% \begin{figure}
%     \centering
%     \includegraphics[width=1\linewidth]{fig/chisq_plot.png}
%     \caption{$T$ computed between pairs of images. Setting 1: Same starting point (different prompts), Setting 2: Different starting point. }
%     \label{fig:placeholder}
% \end{figure}
Lemma~\ref{lemma:subexp} justifies Assumption A2 that the test statistic $Sim(\mathcal{L}_t, \mathcal{L}_t'):=T = \frac{1}{d}\mathcal{L}_t^\top \mathcal{L}_t'$ converges to a sub-exponential distribution. When the generated objects $\mathcal{L}_t$ and $\mathcal{L}_t'$ correspond to two independently sampled starting points from $\mathcal{D}_0$, $T$ is a sub-exponential variable with zero mean. In contrast, when $\mathcal{L}_t$ and $\mathcal{L}_t'$ are generated from the same starting point, $T$ has a non-zero mean. This allows us to distinguish between objects generated from the same vs separate randomly chosen starting points. 

\begin{lemma} \label{lemma:subexp}
    Let $X, Y_1, \dots, Y_n \in \mathbb{R}^d$ be latent vectors produced by the VAE component of an LDM. If $Y_1, \dots, Y_n$ are generated from different starting points sampled independently from $\mathcal{D}_0$, then the inner products $\{X \cdot Y_i\}_{i=1}^n$ are sub-exponential random variables with mean 0. Otherwise, if $Y_i$ and $X$ are generated using the same staring point, prompt, and meta-parameters, $X\cdot Y_i$ remains sub-exponential but has strictly positive mean. 
\end{lemma}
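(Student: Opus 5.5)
We will work coordinatewise. The modelling assumption, following the VAE discussion in Section~\ref{subsubsection:vae}, is that the latent produced from a starting point drawn from $\mathcal{D}_0$ has approximately i.i.d.\ coordinates with mean $0$ and unit variance (the prior $\mathcal{N}(0,I)$), and that each coordinate is sub-Gaussian. The proof then has two parts: establishing sub-exponentiality of the inner product in both cases, and computing its mean in each case.

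\textbf{Sub-exponential tails.} The key tool is the standard fact that the product of two sub-Gaussian random variables is sub-exponential, with $\|UV\|_{\psi_1}\le \|U\|_{\psi_2}\|V\|_{\psi_2}$. Applying it gives
\[
X\cdot Y_i=\sum_{k=1}^d X_kY_{i,k},
\]
a sum of $d$ sub-exponential terms, each with $\psi_1$-norm at most $K^2$, where $K$ bounds the coordinate $\psi_2$-norms. A finite sum of sub-exponential variables is sub-exponential, by the triangle inequality for $\|\cdot\|_{\psi_1}$. This holds with no independence assumption across terms, so it also covers the weakly dependent latents discussed after Assumption~A2. For independent terms, Bernstein's inequality additionally gives the sharper scaling $\|X\cdot Y_i - \IE X\cdot Y_i\|_{\psi_1}=O(\sqrt d\,K^2)$ in the concentration regime. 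This matches the normalisation $\frac1d X^\top Y$ used for $T$. Since the argument uses only marginal sub-Gaussianity, it applies verbatim in both cases of the lemma.

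\textbf{Mean in the independent case.} If $Y_i$ comes from a starting point independent of the one generating $X$, then $X$ and $Y_i$ are independent, so $\IE[X\cdot Y_i]=\sum_k \IE[X_k]\,\IE[Y_{i,k}]=0$. The only input is the zero mean of the VAE prior. We will remark that when $\IE X\neq 0$ in practice, one centres the latents, which is what the MLE fit in Step~1 of the PA absorbs.

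\textbf{Mean in the same-starting-point case.} This is the main obstacle. If $Y_i$ and $X$ share the same starting point, prompt and meta-parameters, the deterministic DDIM sampler makes them identical, so $X\cdot Y_i=\|X\|^2$. Its mean is $\sum_k\IE X_k^2=d>0$, and it remains sub-exponential because $X_k^2$ is a squared sub-Gaussian. The harder situation is the one the framework actually needs, where $Y_i$ is a distorted copy $X+\eta$ of $X$. There we write
\[
\IE[X\cdot Y_i]=\IE\|X\|^2+\IE[X\cdot\eta],
\]
and bound the cross term by Cauchy--Schwarz, giving $\IE[X\cdot\eta]\ge -\sqrt{d}\,\sqrt{\IE\|\eta\|^2}$. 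The mean is therefore strictly positive whenever $\IE\|\eta\|^2<d$, i.e.\ whenever the distortion has normalised energy below one, which the smoothness of the VAE encoder discussed in Section~\ref{subsection:image_similarity_measure} is meant to guarantee. Lemma~\ref{lemma:random_latent_rs} can be cited to contrast this with independent starting points, whose latents are expected to be far apart. We will flag explicitly that positivity in the distorted case rests on this distortion-energy assumption rather than on the model alone.
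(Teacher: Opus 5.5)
Your proposal is correct and follows the paper's own argument. Like the paper, you write $X\cdot Y_i$ as a sum of $d$ coordinate products, each sub-exponential under the $\mathcal{N}(0,I)$ VAE-prior modelling assumption; the mean is zero when the starting points are independent and positive when they coincide. Your version makes the steps rigorous where the paper only asserts them: you use the $\psi_1$/$\psi_2$ product bound and the $\psi_1$ triangle inequality, and you observe that a shared seed gives $X=Y_i$ with mean $\IE\|X\|^2=d$, where the paper only says the terms are ``positively correlated''; the distorted-copy extension goes beyond what the lemma claims.
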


In Lemma \ref{lemma:subexp}, although $X \cdot Y_i$ is sub-exponential, its exact distribution does not admit a closed-form expression. For empirical estimation, we approximate its distribution using a generalized normal distribution, which preserves sub-exponential tail behavior and admits a tractable parametric form (see, e.g.,~\cite{nadarajah2005generalized, zhang2020generalized, aljarrah2019new}). A generalized normal distribution is parameterized by a location parameter (mean) $\mu$, scale parameter $\gamma$ and shape parameter $\beta$, which admit maximum likelihood (MLE) estimators. Under standard regularity conditions, the estimation error is bounded by $O\left( 1/\sqrt{n}\right)$. 

Finally, Theorem~\ref{theorem:hypothesis} presents a statistically optimal test\footnote{Statistically optimal in the sense that it is a uniformly most powerful test. Among all tests with a fixed false-positive rate, it achieves the lowest possible false-negative rate.} for determining whether a contested object corresponds to a given author and associated generation parameters under a pre-determined false-positive rate. This is achieved by leveraging Lemma \ref{lemma:subexp}, and testing whether the similarity score between the contested object and the latent generated by the author’s generation parameters has a statistically significant positive mean. 

\begin{theorem} \label{theorem:hypothesis}
    Let $T=Sim(\mathcal{L}, \mathcal{L}') = \frac{1}{d}\mathcal{L}\cdot \mathcal{L}'$ denote the similarity score between  $\mathcal{L}, \mathcal{L}'$. Assume that $Sim(\mathcal{L}, \mathcal{L}') $ follows a generalized normal distribution with mean $\mu$, and common scale and shape parameters   $\gamma$ and $\beta$. Consider the hypothesis test $H_0: \mu = 0$ ($\mathcal{L}$ and $\mathcal{L}'$ are not similar) v.s. $H_1: \mu > 0$ ($\mathcal{L}$ and $\mathcal{L}'$ are similar), and the test statistic $T(\mathcal{L},\mathcal{L}') = 1(\max(0,Sim(\mathcal{L}, \mathcal{L}')) \geq W_\alpha)$, where $W_a$ is the $(1-a)$ quantile of $W$ (as defined in Section \ref{subsection:PA_spec}), given $I$. It follows that $T$ is a uniformly most powerful test among all tests with a false-positive rate of at most $a$.
\end{theorem}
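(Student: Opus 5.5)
The test in Theorem~\ref{theorem:hypothesis} rejects $H_0$ exactly when $Sim(\mathcal{L},\mathcal{L}')\ge W_\alpha$; the $\max(0,\cdot)$ is immaterial because $W_\alpha>0$ for any small $\alpha$. I will prove that this test is uniformly most powerful by the Neyman--Pearson lemma combined with the Karlin--Rubin theorem. The family is a location family: $T$ has density $p_\mu(x)=c(\beta,\gamma)\exp\!\left(-(|x-\mu|/\gamma)^{\beta}\right)$ with $\gamma,\beta$ common and known, or estimated under $H_0$ as in Step~1 of the PA. The plan has three parts. First, calibrate the test to size $\alpha$ using the null distribution $W$. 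Second, fix an arbitrary alternative $\mu_1>0$ and use Neyman--Pearson to show that the most powerful level-$\alpha$ test of $\mu=0$ against $\mu=\mu_1$ rejects on an upper tail $\{T\ge c\}$. Third, note that $c$ does not depend on $\mu_1$, which gives uniformity.

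\textbf{Size.} Under $H_0$ the score $T$ has the distribution $W$ (by Lemma~\ref{lemma:subexp}, the mean is $0$ for independent starting points). By definition of $W_\alpha$, $\IP_0[T\ge W_\alpha]=\alpha$, and $W$ is continuous, so the size is exactly $\alpha$ with no randomization needed.

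\textbf{Most powerful test for a fixed alternative.} For fixed $\mu_1>0$, the likelihood ratio is
\[
\Lambda(x)=\frac{p_{\mu_1}(x)}{p_0(x)}=\exp\!\left(\gamma^{-\beta}\bigl(|x|^{\beta}-|x-\mu_1|^{\beta}\bigr)\right).
\]
I will show that $h(x)=|x|^{\beta}-|x-\mu_1|^{\beta}$ is nondecreasing in $x$. Then $\{\Lambda\ge k\}$ is an upper half-line $\{x\ge c\}$, and matching its size to $\alpha$ forces $c=W_\alpha$. Neyman--Pearson then identifies this test as most powerful against $\mu_1$. Since $W_\alpha$ is determined by $H_0$ alone, the same test is most powerful for every $\mu_1>0$, which is exactly the UMP property. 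If a composite null $\mu\le 0$ were required, monotonicity of the power function in $\mu$ for a location family would give size $\le\alpha$ over the whole null.

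\textbf{Main obstacle: monotonicity of $h$.} Differentiating gives $h'(x)=\beta\bigl(\mathrm{sgn}(x)|x|^{\beta-1}-\mathrm{sgn}(x-\mu_1)|x-\mu_1|^{\beta-1}\bigr)$. Checking the three regions $x<0$, $0<x<\mu_1$, and $x>\mu_1$ shows $h'\ge 0$ when $\beta\ge1$; this is the log-concave case, where the family has monotone likelihood ratio. For $\beta<1$ the monotonicity fails in the tails; for example, $h(x)\to 0$ as $x\to\pm\infty$. A genuinely UMP test therefore cannot be obtained from Karlin--Rubin there. I would handle this by restricting the shape parameter to $\beta\ge1$, which is consistent with the sub-exponential and near-Gaussian behaviour of Assumption~A2 and Lemma~\ref{lemma:subexp} (the fitted $\beta$ should be near $2$), and state the restriction explicitly. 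Without it, the conclusion only holds for the upper-tail region that the PA actually uses.
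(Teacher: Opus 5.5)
Your proof is correct under the restriction $\beta\ge1$ that you add, and it argues differently from the paper. The paper forms a single generalized likelihood ratio, $\sup_{\mu\ge0}f(q;\mu)/f(q;0)=\exp\bigl((q/\gamma)^\beta\bigr)$ with $q=\max(0,Sim)$, notes that it increases in $q$, and cites Neyman--Pearson. You instead apply Neyman--Pearson to each simple alternative $\mu_1>0$ and verify a monotone likelihood ratio, i.e.\ you use Karlin--Rubin.

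Yours is the valid route. Neyman--Pearson is a statement about simple-versus-simple ratios. Monotonicity of the generalized ratio only shows that the test has GLRT form, not that it is most powerful against every $\mu_1$. The paper's computation never uses any condition on $\beta$ and goes through verbatim for $\beta<1$, where the conclusion is false. Your analysis of $h(x)=|x|^\beta-|x-\mu_1|^\beta$ is what exposes the needed condition $\beta\ge1$. You also avoid the paper's assumption that $\max(0,Sim)$ is itself generalized normal (it has an atom at $0$), by observing that $W_\alpha>0$.

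There are two refinements.

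\emph{Make the $\beta<1$ case sharp, and drop your final sentence, which has no precise content.} For $\beta<1$, $h$ decreases on $(-\infty,0)$, increases on $(0,\mu_1)$ and decreases on $(\mu_1,\infty)$, with $h(0)=-\mu_1^\beta$, $h(\mu_1)=\mu_1^\beta$ and $h\to0$ at $\pm\infty$. Hence the size-$\alpha$ Neyman--Pearson region is a bounded interval around $\mu_1$ when $\mu_1<2W_\alpha$, and a union $(-\infty,x_1)\cup(x_0,\infty)$ when $\mu_1>2W_\alpha$. So the upper-tail test is most powerful only against $\mu_1=2W_\alpha$, and no UMP test exists. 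The restriction is therefore necessary, but it is not really an extra hypothesis. Under the usual light-tailed definition ($\IE e^{\lambda|X|}<\infty$ for some $\lambda>0$), a generalized normal law is sub-exponential iff $\beta\ge1$, so Assumption A2 already forces it.

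\emph{Handle the flat pieces at $\beta=1$.} There the ratio $\Lambda$ is constant on $(-\infty,0]$ and on $[\mu_1,\infty)$. So no superlevel set of $\Lambda$ need have size exactly $\alpha$, and ``matching its size forces $c=W_\alpha$'' is not literally true. Use the Karlin--Rubin form instead. With $k=\Lambda(W_\alpha)$, the test $1\{x\ge W_\alpha\}$ equals $1$ where $\Lambda>k$ and $0$ where $\Lambda<k$, which is the Neyman--Pearson sufficiency condition.
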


\subsection{Adversarial Attack on Similarity Function} \label{subsection:adv_attack}
Given a contested object $I$, generated from latent $\mathcal{L}$, and its corresponding generation parameters $\kappa = \langle m,e,r\rangle$, consider a distance function $D(\cdot, \cdot)$ over the metric space of $\mathcal{L}$, such as the  $\ell_2$ distance.  An adversary may attempt an adversarial attack on $Sim(\mathcal{L}, \cdot)$, by finding a $\widetilde{\mathcal{L}}$ s.t. $D(\mathcal{L}, \widetilde{\mathcal{L}})$ is small, but  $Sim(\mathcal{L}, \widetilde{\mathcal{L}})$ is significantly reduced. This causes the judge to fail to attribute  $\widetilde{\mathcal{L}}$  to $\kappa$. 
% The definition of an adversarial attack on the PA's similarity function is formally stated in Definition \ref{dfn:adv_example}. 
% \begin{definition}[\emph{Adversarial Attack on PA's Similarity Function}] \label{dfn:adv_example}
% Given a PA, let $\mathcal{L}$ be a contested object corresponding to author $i$ and generation parameters $\kappa$. We say that $\widetilde{\mathcal{L}}$ is an $(\epsilon, D)$-adversarial example of $\mathcal{L}$ for the PA if $D(\mathcal{L}, \widetilde{\mathcal{L}}) \leq \epsilon$ and the PA rejects the assertion that $\widetilde{\mathcal{L}}$ corresponds to the  $\kappa$ and author $i$.
% \end{definition}

\subsubsection{Robustness against Adversarial Attack in $\ell_p$ }
Suppose  the adversary wants to construct a $\widetilde{\mathcal{L}} = \mathcal{L} + v$, where $||v||_p \leq \epsilon$, such that the PA computes a low similarity score and high $\hat{q}$, indicating that $\widetilde{\mathcal{L}}$ is not generated by author $i$ using generation parameters $\kappa$.
For any $p\geq 1$ and $||v||_p \leq \epsilon$, we show in Appendix~\ref{appendix:adversarial} that  $Sim(\mathcal{L}, \widetilde{\mathcal{L}}) \geq Sim(\mathcal{L}, \mathcal{L}) - \frac{1}{d}\epsilon ||\mathcal{L}||_q$, where $q = \frac{p}{p-1}$. For example, when $p=2$, $q=2$ whereas when $p=1, q = \infty$. Furthermore, for all $q>1$ (i.e., $p<\infty$), $\|\mathcal{L}\|_q \in o(d)$. Therefore, the impact of $v$ on the similarity score, for reasonable choices of $\epsilon$ which preserve perceptual similarity between the generated objects, is negligible. 

Since we impose no structural assumptions on the perturbation $v$ beyond the norm constraint $||v||_p \leq \epsilon$, the above bound characterizes the worst-case degradation of the similarity score under $\ell_p$-bounded adversarial perturbations. We therefore conclude that the similarity function is robust against $\ell_p$ adversarial perturbations that aim to reduce it (including gradient-based optimizations such as FGSM or PGD).

\subsubsection{On Geometric Transformations}
To evade detection, the forger may attempt a perceptual preserving geometric transformation on the generated object. Unlike $\ell_p$-bounded perturbations, geometric distortions induce non-local but structured changes.  In this case, a transformation of the forged object $\mathcal{L}$ may correspond to a transformed version of the original $L_m(e, f_i(\kappa))$. For example, the forged object could be a rotated or scaled version of the original. To mitigate this, the author provides a transformation $t$ as described in Section \ref{subsection:PA_spec}, to align the generated and forged objects prior to comparing similarity. These transformations are also evaluated by the judge to determine if they are reasonable. 
Designing similarity functions robust to such  distortions remains an open area of research.

\section{Evaluations} \label{section:empirical}

 We empirically verify the following:
 \begin{enumerate}
    
     \item Section \ref{section:expt_sim}: Distribution of $Sim(L_m(e, s_1), L_m(e,s_2))$, where $s_1, s_2 \sim \mathcal{D}_0$, follows a sub-exponential distribution as described in Assumption A2, and can be accurately approximated using small number of samples
    \item Section \ref{section:similarity_eval}: The probabilistic adjudicator (PA) described in Section~\ref{subsection:PA_spec} is able to accurately verify assertions with high statistical confidence, determining whether a contested object corresponds to a particular author and POA. 
    \item Section \ref{section:eval_robust}: The PA is empirically shown to be robust to noise (e.g., additive Gaussian noise and JPEG compression) and perceptually preserving transformations (e.g. cropping, rotation, scaling, translation).
 \end{enumerate}
 Additionally,  Section \ref{section:expt_distance}, empirically verifies $\ell_2$ distance between generated latents is preserved up to a constant factor relative to their starting points.
 Appendix~\ref{appendix:samples_generated} presents representative images generated using our POA scheme.

\subsection{Experimental Setup} \label{section:expt_setup}

\paragraph{LDM and datasets}
We employed Stable Diffusion 2.1 \cite{DBLP:conf/cvpr/RombachBLEO22} with its default  parameters and VAE 
as the LDM, which is available from HuggingFace. 
 Images were generated using prompts sourced from the Stable Diffusion Prompt dataset\footnote{\url{https://huggingface.co/datasets/Gustavosta/Stable-Diffusion-Prompts}}. Finally, we employed the HMAC with SHA-256 hash as our PRF to construct the seed using the the author's identifier and generation parameters, which is then passed into PyTorch's Gaussian random generator to sample from $\mathcal{N}(0, 1)$

\paragraph{System Configuration}
Processor: Intel(R) Xeon(R) CPU E5-2620 v4 @ 2.10GHz;
RAM: 256GB;
OS:  Ubuntu 20.04.4 LTS;
GPU: NVIDIA Tesla V100.

% \subsubsection*{Evaluation Metrics} 
% To evaluate the effectiveness of our proposed POA, we consider the following metrics:
% \begin{enumerate}
%     \item \emph{False positive rate (FPR)}: Proportion of images wrongly attributed to an incorrect POA
%     \item \emph{Accuracy (ACC)}: Proportion of images correctly linked with their corresponding POA. 
% \end{enumerate}

\paragraph{Registration, Generation and Contention}
The registration, generation and contention process is described in Section \ref{subsubsection:generation}. The specifics of our implementation is described below. 

\paragraph{\textit{Registration}} An author is assigned a unique randomly sampled integer identifier $i$. 

\paragraph{\textit{Generation} } The author experiments on various LDM parameters $m$, prompt embedding $e$, and a sequence of random bits $r$. Next, compute the seed $s_{\langle m, e, r \rangle} = f_i (\langle m, e, r \rangle)$, where $f_i$ is the HMAC-SHA256 pseudorandom function using $i$ as the key. Finally, the author generates the starting point $G(s_{\langle m, e, r \rangle})$, and uses it to generate the latent space representation $\mathcal{L}$, which is then upscaled into the final generated image $I$ through the VAE. In our evaluations, we employed PyTorch's Gaussian random number generator as $G$, and the default VAE included in the Stable Diffusion 2.1 pipeline. 

\paragraph{\textit{Contention} } Suppose the author wishes to assert their authorship of an image $I'$ using the generation parameters $ \kappa = \langle m, e, r \rangle$. The following is performed:
\begin{enumerate}
    \item Generate the latent space representation $\mathcal{L}'$ by passing $I'$ into the LDM's VAE encoder.
    \item Generate  $\mathcal{L} = L_m(e, G(f_i(\kappa)))$. 
    \item 
    Compute the test statistic in Theorem \ref{theorem:hypothesis} over $\mathcal{L}$ and $\mathcal{L}'$, which is $T = \frac{1}{d}\mathcal{L}^T \mathcal{L}'$. 
\end{enumerate}

\subsection{Distribution of Similarity Function} \label{section:expt_sim}

Assumption A2 posits that given a generated object $I=L_m(e, s_0)$, the distribution of $Sim(I, L_m(e, s))$, where $s \sim \mathcal{D}_0$, follows a sub-exponential distribution. Since the sub-exponential distribution is intractable, we assumed in Section \ref{subsection:similarity} that a generalized normal distribution closely approximates this sub-exponential distribution. To validate these two assumptions, we perform the following: Given a fixed prompt embedding $e$ and model meta-parameters $m$,
\begin{enumerate}
    \item Independently sample $s_0, s_1, \dots, s_{n} \sim \mathcal{N}(0, 1)$, where $n = \ln^{2} \frac{1}{\alpha} \ln \frac{1}{\delta}$, $\alpha = 2^{-10}$ and $\delta = 10^{-3}$
    \item Generate $I_0, \cdots, I_{n} $, where $I_j = L_m(e, s_j)$ denotes the (flattened) latent generated conditional on $m, e$ and starting point $s_j$.
    \item Compute the similarity scores $T = Sim(I_0, I_j) = \frac{1}{d} I_0 \cdot I_J$ for $j=1, \dots, {n} $, where $d = 4\times 64 \times 64$. 
    \item Approximate the scale, location and shape parameters of the generalized normal distribution.
    \item Compute the Kolmogorov-Smirnov (KS) distance between the computed similarity scores and the approximated distribution
\end{enumerate}
This procedure is repeated over $100$ text embeddings generated from randomly sampled prompts. 

Given two cumulative distribution functions (CDF) $F$ and $\hat{F}$ over support $\mathbb{X}$, the KS distance between $F$ and $\hat{F}$ is:
$
    KS(F, \hat{F}) = \sup_{x \in \mathbb{X}} |F(x) - \hat{F}(x)|.
$
Intuitively, $KS(F, \hat{F})$ captures the maximum discrepancy between $F$ and $\hat{F}$, small values indicate that the two distributions are close. The mean and standard deviation of the KS distances across the 100 embeddings are 0.012 and 0.003, respectively, and the distribution is illustrated in Figure \ref{fig:ks}. This empirically confirms that the sub-exponential distribution closely approximates the distribution of the similarity scores. 

\begin{figure}
     \centering
     \includegraphics[width=.85\linewidth]{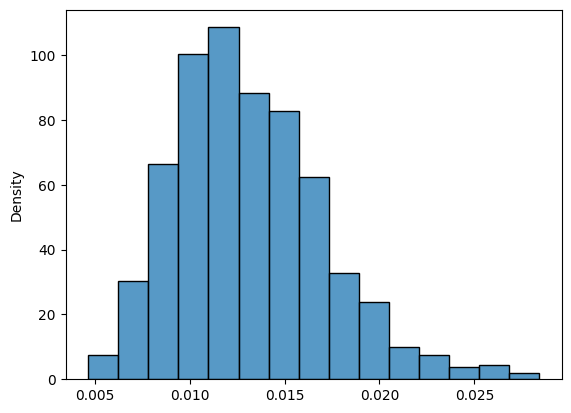}
     \caption{Empirical distribution of KS distances, which observes that the fitted sub-exponential distribution closely approximates the distribution of the similarity scores.}
     \label{fig:ks}
 \end{figure}
  
\subsection{Similarity Function Evaluation} \label{section:similarity_eval}

We evaluate similarity metric proposed in Section \ref{subsection:similarity} (i.e. $Sim(X, Y) = \frac{1}{d}X \cdot Y$) and the Probabilistic Adjudicator described in Section \ref{subsection:PA_spec} under the following settings:
\begin{enumerate}
    \item \textbf{Clean setting:} $X$ and $Y$ are clean and unmodified images generated directly by the LDM
    \item \textbf{Noise and compression setting:} $Y$ is obtained from $X$ by applying Gaussian noise or JPEG compression.
    \item \textbf{Affine transformation setting:}  $Y$ is obtained from $X$ via a mild, perceptually preserving affine transformation.
\end{enumerate}

\paragraph{Clean Performance and Robustness to Noise and Compression}
To evaluate the performance of our similarity metric and PA, we proceed as follows. For a $\delta = 10^{-4}$, contested object $I$, and $\kappa = \langle m,e,r\rangle$, we run the PA described in Section \ref{subsection:PA_spec} as follows:
\begin{enumerate}
\item Independently sample $s_1, \dots, s_{n} \sim \mathcal{N}(0, I)$, where $n = \ln^{2} \frac{1}{\alpha} \ln \frac{1}{\delta}$. Compute $s_0 = G(f_i(\langle m, e, r\rangle))$ as the starting point derived from the authors's identity and generation parameters.  
\item Generate latents $J_0, \dots, J_{n}$, where $J_j = L_m(e, s_j)$ denotes the flattened latent generated conditional on $(m, e)$ and starting point $s_j$.

\item Compute similarity scores $Sim(J_0, J_j)= \frac{1}{d} J_0 \cdot J_j,$ for $j = 1, \dots, n$, 
as defined in Section~\ref{subsection:similarity}, where $d = 4 \times 64 \times 64$.

\item Fit a generalized normal distribution to the empirical similarity scores using the {gennorm} function from Python’s {scipy} library, and estimate the empirical similarity distribution $\widehat{W}$

\item Compute $T=Sim(J_0, I)$. Output $\hat{q} = \IP[\widehat{W} \geq T]$ and its corresponding $(1-\delta)$ confidence interval $\mathcal{C}_\delta$
\end{enumerate}

\paragraph{Judge}  Let $q'$
denote the upper quantile of $\mathcal{C}_{\delta}$. We simulate the judge's decision by accepting the assertion that the contested $I$ originates from the original $J_0$, if $q' \leq p_r$, where $p_r \in \{2^{-10}, 2^{-30}, 2^{-50}\}$, with a corresponding choice of $\alpha = p_r / 2$. Recall from Section \ref{subsection:PA_spec} that $p_r$ upper bounds the probability that a random forger generates an object whose similarity score exceeds the threshold $T$, and $\alpha$ is an upper bound of the estimation error of $\hat{q}$ with probability at least $(1-\delta)$

This procedure is repeated over $100$ text embeddings generated from randomly sampled prompts. We consider the following types of distortions applied to $I_0$:
\begin{itemize}
    \item \textbf{Gaussian noise:} Add noise sampled from $\mathcal{N}(0, \sigma^2)$ to the base image, with $\sigma^2 \in \{1, 3, 9\}$.
    \item \textbf{JPEG compression:} The decoded image corresponding to $I_0$ is compressed using JPEG with quality ($q$) parameters $75, 50$ and $10$, where lower $q$ indicates higher compression.
\end{itemize}

\subsection{Robustness against Affine Transformations } \label{section:eval_robust}
To evaluate robustness against mild geometric distortions, we additionally consider a class of perceptually preserving affine transformations. Specifically, for each original $I_0$, we apply a randomly sampled affine transformation parameterized as follows:
\begin{itemize}
\item \textbf{Scaling (zoom):} A multiplicative scaling factor sampled uniformly from $[0.98, 1.02]$, corresponding to a mild zoom of up to $\pm 2\%$.
\item \textbf{Translation:} Horizontal and vertical translations sampled uniformly from $[-2\%, 2\%]$ of the image dimensions.
\item \textbf{Rotation:} A rotation angle sampled uniformly from $[-3^\circ, 3^\circ]$.
\item \textbf{Shearing:} A shear angle sampled uniformly from $[-2^\circ, 2^\circ]$.
\end{itemize}
In our evaluations, we assume that the PA can accurately determine the affine transformation used. Let $t$ be the transformation that aligns the contested $I$ and the original object $J_0$ (generated using the author's identifier and generation parameters).
We modify the testing procedure described in Section \ref{section:similarity_eval},  to account for affine perturbations as follows: 
\begin{itemize}

\item Replace Step~(3) with: Compute $Sim(J_0, t(J_j))$ for $j = 1, \dots, n$.

\item Replace Step~(5) with:  Compute $T=Sim(J_0, t(I))$. Output $\hat{q} = \IP[\widehat{W} \geq T]$ and its corresponding $(1-\delta)$ confidence interval $\mathcal{C}_\delta$
\end{itemize}

The empirical results of the PA under various transformation settings and random forger success probabilities are given in Table~\ref{table:poa_evals}. Note that the false-reject rate refers to the probability that an object with a valid generation parameters is incorrectly rejected by the judge at a given random forger success rate. Furthermore, during our evaluations, we did not encounter any false positives (i.e. random images which were above the similarity threshold) across all settings. The similarity-score distributions remain well separated and the empirical false-positive rates remain negligible even under substantial noise, compression, and affine perturbations.

\begin{table}[h]
\centering
\resizebox{\columnwidth}{!}{%
\begin{tabular}{|l|c|ccc|}
\hline
\multirow{2}{*}{\textbf{Distortion}}                                                                             & \multirow{2}{*}{\textbf{\begin{tabular}[c]{@{}c@{}}Distortion\\ Details\end{tabular}}} & \multicolumn{3}{c|}{\textbf{\begin{tabular}[c]{@{}c@{}}Random Forger's\\ Success Rate $p_r$\end{tabular}}} \\ \cline{3-5} 
                                                                                                                 &                                                                                        & \multicolumn{1}{c|}{$2^{-10}$}         & \multicolumn{1}{c|}{$2^{-30}$}         & $2^{-50}$         \\ \hline
None                                                                                                             & -                                                                                      & \multicolumn{1}{c|}{0.00}               & \multicolumn{1}{c|}{0.00}              & 0.00              \\ \hline
\multirow{3}{*}{\begin{tabular}[c]{@{}l@{}}Additive Gaussian \\ Noise \\ $\mathcal{N}(0,\sigma^2)$\end{tabular}} & $\sigma^2=1$                                                                           & \multicolumn{1}{c|}{0.00}               & \multicolumn{1}{c|}{0.00}              & 0.00              \\ \cline{2-5} 
                                                                                                                 & $\sigma^2=3$                                                                           & \multicolumn{1}{c|}{0.00}               & \multicolumn{1}{c|}{0.00}              & 0.00              \\ \cline{2-5} 
                                                                                                                 & $\sigma^2=9$                                                                           & \multicolumn{1}{c|}{0.00}               & \multicolumn{1}{c|}{0.00}              & 0.00              \\ \hline
\multirow{3}{*}{\begin{tabular}[c]{@{}l@{}}JPEG Compression\\ Quality $q$\end{tabular}}                          & $q = 75$                                                                               & \multicolumn{1}{c|}{0.00}               & \multicolumn{1}{c|}{0.00}              & 0.00              \\ \cline{2-5} 
                                                                                                                 & $q = 50$                                                                               & \multicolumn{1}{c|}{0.00}               & \multicolumn{1}{c|}{0.00}              & 0.00              \\ \cline{2-5} 
                                                                                                                 & $q = 10$                                                                               & \multicolumn{1}{c|}{0.00}               & \multicolumn{1}{c|}{0.00}              & 0.01              \\ \hline
\begin{tabular}[c]{@{}l@{}}Random Affine\\ Transformation\end{tabular}                                           & -                                                                                      & \multicolumn{1}{c|}{0.00}               & \multicolumn{1}{c|}{0.00}              & 0.00              \\ \hline
\end{tabular}%
}
\caption{Empirical evaluations of the PA's similarity-score distributions under clean and distorted inputs. For each distortion, we report the judge false reject rate at random forger success rates $\{2^{-10}, 2^{-30}, 2^{-50}\}$. All values are rounded to 2 decimal places.} 
\label{table:poa_evals}
\end{table}

\section{Related Work } \label{section:related_work}

\subsection{Digital Watermarking for Authorship Protection} \label{subsection:related_watermark}
Image watermarking has been a topic of extensive research for several decades, 
with work as early as 1994 \cite{413536}. Early watermarking schemes focus on the post-watermarking setting, where watermarks are embedded in the pixel or frequency domain (e.g. \cite{DBLP:conf/spieSR/KutterJB97, DBLP:conf/mm/KankanhalliRR98}). The latter has been shown to be more robust against mild affine and geometric perturbations. Such watermarking schemes have been used to embed authorship information into digital media, see for example \cite{sheppard2004secure, DBLP:journals/access/MalanowskaMAMK24}.

Recently, there has been growing interest in watermarking AI-generated content, such as the output of large language models (LLMs) and diffusion models.  \cite{DBLP:conf/icml/KirchenbauerGWK23} introduced the notion of undetectable watermarks in LLMs, which embeds the watermark by controlling the model's pseudorandom number generator. Since then, there has been continued interest in watermarking AIGC, for use cases such as copyright protection and misuse tracking. 

Watermarking techniques for AIGC created using LDMs can be broadly categorized as follows. 
\begin{enumerate}
    \item \textbf{Post-processing watermarking}: Embeds watermarks directly into generated images using traditional pixel and frequency domain, but is vulnerable to common attacks like removal or ambiguity attacks (see for example \cite{DBLP:conf/ccnc/SongSML10}).
    \item \textbf{Fine-tuning-based watermarking}: Modifies the diffusion model to introduce watermarks during generation, and a separate trained detector to extract such watermarks \cite{DBLP:conf/iccv/FernandezCJDF23, DBLP:journals/corr/abs-2405-02696}. However, it is possible to remove such watermarks from a model by further fine-tuning them on clean data \cite{hu2024stablesignatureunstableremoving, huang2025watermarkrobustnessradioactivityodds}.
    \item \textbf{Latent space watermarking}: Embeds watermarks in the latent space starting point \cite{DBLP:journals/corr/abs-2410-07369,yang2024gaussianshadingprovableperformancelossless,wen2023treeringwatermarksfingerprintsdiffusion}. Care must be taken when designing such watermarks, as methods like Tree-Ring Watermarking \cite{wen2023treeringwatermarksfingerprintsdiffusion}, alter the latent distribution, making them detectable via steganographic attacks \cite{yang2024can}. Empirical evaluations \cite{DBLP:journals/corr/abs-2410-07369,shehata2024cluemarkwatermarkingdiffusionmodels} also show that Gaussian Shading  introduces artifacts in the generated images \cite{yang2024gaussianshadingprovableperformancelossless} and restricts the diversity of output images \cite{DBLP:journals/corr/abs-2410-07369}. The PRC watermarking scheme \cite{DBLP:journals/corr/abs-2410-07369} is as of now, the only provably undetectable latent space watermarking method, but is only able to encode a limited number ($\sim 500$) of bits robustly if a small amount of white-noise noise is added to the image. 
\end{enumerate}

\subsection{Attacks on Watermarking}

Recently, there has been increasing interest in attacks targeting the removal of watermarks from AI Generated Content \cite{DBLP:journals/corr/abs-2108-04974}. Such attacks aim to remove watermarks from a watermarked content while preserving their quality and usability \cite{DBLP:journals/corr/abs-2402-16187}. For instance, \cite{NEURIPS2024_10272bfd} demonstrated that traditional post-processing watermarking schemes are vulnerable to regeneration attacks, where a generative AI model is used to create a perceptually similar but unwatermarked version of the image. Post-processing watermarks are also susceptible to ambiguity attacks \cite{DBLP:conf/ih/LiC04, DBLP:journals/jsac/CraverMYY98}.

Recent research has explored embedding watermarks into the generation process of AIGC by sampling the starting point from a secret watermarked distribution \cite{yang2024gaussianshadingprovableperformancelossless, DBLP:journals/corr/abs-2410-07369, wen2023treeringwatermarksfingerprintsdiffusion}. However, some of these schemes produce outputs which are easily distinguishable with non-watermarked objects \cite{DBLP:journals/corr/abs-2410-07369, shehata2024cluemarkwatermarkingdiffusionmodels} and can be  removed \cite{lee2025removal}. Moreover, empirical evaluations of these schemes show that these watermarks can be removed by adding mild white-noise if the message length is too long (> 1,000 bits) \cite{lee2025removal}. This limits their effectiveness in encoding authorship information in the generated content. 

\subsection{Authentication of Digital Content}
Traditionally, digital content (e.g., a document, video, or digital image) is authenticated by digitally signing the content, which is made public to enable verification. To prove authorship, the digital signature should be timestamped using some verifiable timestamping service (see for example \cite{1654294, 10.1145/358790.358798}) -- which allows the first entity who registered the timestamped signature to be designated as the author. 

In settings where the digital content may be altered slightly (e.g. mild noise or compression), this becomes more challenging as small change in the content renders the original signature invalid \cite{10.1145/1671954.1671960}. Efforts to address this include making such signatures robust against mild compression \cite{cryptoeprint:2024/588}, considering robust (perceptual) hashes which tolerate mild distortions \cite{DU2020115713, DBLP:journals/di/McKeownB23}, and constructing hashes robust to certain classes of transformations \cite{7546506}. Nonetheless, some of these approaches suffer from high false reject rates (where two similar content is hashed as not similar) \cite{10.1145/3460120.3484559} or may be circumvented using perceptual-preserving transformations not considered in the hash  -- see \cite{905982} for an example of such transformation.
To date, besides watermarking, we are not aware of  existing work which deals with authentication specifically for AI generated content.

\section{Discussion}

\subsection{Watermarking vs Proof-of-Authorship}
A natural question arises: can we leverage existing watermarking techniques to embed POA information into images generated by LDMs, which can  be extracted to assert ownership during contention (see, for example, \cite{PAGE1998390, DBLP:journals/access/MalanowskaMAMK24})? 

Firstly, existing watermarking schemes typically rely on a secret key for watermark generation and decoding. For example, PRC~\cite{DBLP:journals/corr/abs-2410-07369}, NoisePrints~\cite{goren2025noiseprintsdistortionfreewatermarksauthorship} and Gaussian Shading~\cite{yang2024gaussianshadingprovableperformancelossless} relies on secret keys to ensure undetectability.  In contrast, our POA scheme does not rely on any secret. Instead, all POA-related information can be released publicly, enabling transparent and public authorship verification. 

As discussed in Section~\ref{subsection:related_watermark}, existing watermarking approaches are vulnerable to removal and ambiguity attacks, and watermarked models can be effectively de-watermarked. Consequently, watermarking alone may be insufficient to establish authorship with high confidence.

Notably, NoisePrints~\cite{goren2025noiseprintsdistortionfreewatermarksauthorship} adopts a parallel approach but is fundamentally different from our framework. In NoisePrints, the initial random starting point is generated by passing a cryptographic hash of a secret as a seed into the starting point sampler. To authenticate a contested object, the object is regenerated using this seed together with the generation parameters, and the similarity between the contested and regenerated objects is evaluated. In contrast, our POA framework does not rely on any secret. NoisePrints proposes hiding this secret via  zero-knowledge proof mechanism -- this incurs significant computational overhead~\cite{samudrala2024performance}. In contrast,  our POA framework enables public verification without any cryptographic proof system.

In contrast, our POA framework cryptographically binds an author and generation parameters to an object. Even if an adversary could accurately invert an image to its original latent starting point and knows the seed used, it remains computationally inefficient (due to Assumption A1) to derive another set of generation parameters which maps to the same object under the forger's identity. Furthermore, our POA framework does not require watermarking the model itself, authorship verification remains possible as long as the original model used is available. Finally, under Assumption~A1 (the PRF assumption), the distribution of starting points generated by our framework is computationally indistinguishable from that of truly random starting points.
Consequently, objects generated using the POA framework are indistinguishable from those produced by the standard LDM generation process, and therefore do not introduce the visible or statistical artifacts observed in some watermarking schemes.

 \subsection{Practical Considerations}
Similar to digital signature-based authentication systems, our framework requires the registration of POA-related information—e.g. the author's  unique ID. This registration can be perform in a timestamped manner, for instance, by recording it in a trusted ledger or through public, verifiable publication.

Secondly, most widely-used LDMs (e.g., various versions of Stable Diffusion) sample the initial latent point from $\mathcal{N}(0, I)$. As explained in Section \ref{subsection:gf}, our POA framework does not impose any security assumptions on the starting point sampler $G$. If a model employs a different staring point distribution, $G$ can be replaced accordingly, while deriving the seed in the same manner.

% \subsection{Failure Scenario} \label{subsection:failure}
% A failure scenario occurs when the PA is unable to distinguish objects generated using a valid POA from those generated otherwise. Formally, given a contested object $I$ generated by author $i$ under POA $\kappa$, using prompt embedding $e$ and LDM meta-parameters $m$, the similarity score $Sim(I, L_m(e, f_i(\kappa))$ is statistically indistinguishable from a sample drawn from $Sim_{s \leftarrow \mathcal{D}_0} (I, L_m(e, s))$. This can only happen if the model conditioned on $e$ and $m$ produces highly similar objects regardless of the chosen starting point. In practice, such a scenario would warrant scrutiny of the LDM implementation, as it may indicate that the model contains a backdoor~\cite{zhai2023texttoimagediffusionmodelseasily} or has been trained on insufficient data, leading to memorization~\cite{gu2025memorizationdiffusionmodels}.

% Nonetheless, this undesirable behavior can be detected by analyzing the distribution of ${W^I_\alpha : \alpha \in (0,1)}$, where $W^I_\alpha$ is defined in (\ref{eqn:W}). In such cases, the distribution becomes tightly concentrated with low variance, resulting in the PA being unable to distinguish between valid and invalid POAs. Finally, we emphasize that this failure scenario does not contradict the theoretical analysis presented in the security analysis (Section \ref{sec:securityanalysis}) -- In such a failure scenario, producing a valid POA would be equally easy for both a random forger and a PPT adversary, so the advantage of the PPT adversary remains negligible.

\section{Conclusion}
In this paper, we have introduced a proof-of-authorship  framework for latent diffusion models. This framework cryptographically binds an author and generation parameters to the generated object. Unlike prior watermarking approaches, no secrets are needed. Furthermore, our analysis shows that no polynomial-time forger can outperform random guessing when attempting to forge a valid assertion for a generated object.
Finally, empirical evaluations demonstrate that the framework reliably binds authorship information to generated content, while remaining robust against mild-to-moderate perturbations (e.g. noise, compression, affine transformations).

\bibliographystyle{plain}
\bibliography{sample}

\appendix

\section{Estimating $q$} \label{appendix:p_estimate}

As described in Section \ref{subsubsection:PA}, given a pre-specified false-positive rate $\alpha$ and confidence $1-\delta$, the PA estimates a $(1-\delta)$ confidence interval for $q$. Leveraging Assumption A2, together with the mild assumptions that the sub-exponential tail is modeled by a generalized normal distribution (as is common in practice as discussed in Section \ref{subsection:similarity}) and that standard MLE regularity conditions hold,
we show below that $n \in \Omega \left(\ln^2 \frac{1}{\alpha} \ln \frac{1}{\delta}\right)$ independent samples suffices. This result is formally stated as Theorem \ref{thm:p_min_samples}. Henceforth, we will set the number of Monte-Carlo samples used by the PA (as described in Section \ref{subsection:PA_spec}) to estimate the similarity distribution to $n = \ln^{2} \frac{1}{\alpha} \ln \frac{1}{\delta}$, which ensures that for a given $\alpha$, the estimation error of $q$ is upper bounded by $\Delta q \leq \frac{q}{5}$ with probability at least $1-\delta$

\begin{theorem} \label{thm:p_min_samples}
    Let $x_1, \dots, x_n$ be independent and identically distributed realizations of random variable $X$, which follows a Generalized Normal Distribution, parameterized by location, shape and scale parameters $\theta = (\mu, \gamma, \beta)$. For a fixed $\alpha \in (0,1)$ and $T \in supp(X)$, let $q(\theta) = \IP[X\geq T]$. To determine if $q(\theta) \leq \alpha$ and  upper bound the estimation error of $q(\theta)$ as $\Delta q \leq \alpha$ with probability at least $1-\delta$, it suffices to use $$n \in \Omega \left(\ln^2 \frac{1}{\alpha} \ln \frac{1}{\delta}\right)$$ samples. 
\end{theorem}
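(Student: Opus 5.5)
We plan a parametric delta-method argument combined with MLE concentration. The key observation is that $q(\theta)$ is a smooth function of the three parameters $\theta=(\mu,\gamma,\beta)$ of the generalized normal distribution. Hence an error in $\hat q=q(\hat\theta)$ is controlled by the error in $\hat\theta$ times the gradient $\nabla_\theta q$. Under Assumption A2 and the generalized normal tail, $q(\theta)=\tfrac{1}{2}\Gamma\bigl(1/\beta,\,((T-\mu)/\gamma)^\beta\bigr)/\Gamma(1/\beta)$ for $T>\mu$. This is roughly $\exp(-((T-\mu)/\gamma)^\beta)$ times a polynomial factor. The regime of interest is $q\approx\alpha$, so $u:=((T-\mu)/\gamma)^\beta\approx\ln\frac{1}{\alpha}$.

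\textbf{Step 1 (gradient bound).} We differentiate $\ln q$ with respect to each parameter. Each partial derivative should be of order $u$ up to logarithmic factors:
\begin{itemize}
\item $\partial_\mu \ln q\approx \beta u/(T-\mu)$;
\item $\partial_\gamma\ln q\approx \beta u/\gamma$;
\item $\partial_\beta \ln q\approx -u\ln((T-\mu)/\gamma)$, plus lower-order terms from the $\Gamma(1/\beta)$ normalization.
\end{itemize}
This gives $\|\nabla_\theta \ln q\|=O(\ln\frac{1}{\alpha})$, treating $\beta,\gamma$ and the ratio $(T-\mu)/\gamma$ as being in a compact regularity set. Consequently a parameter error $\|\hat\theta-\theta\|\le\varepsilon$ yields $|\ln\hat q-\ln q|\lesssim \varepsilon\ln\frac{1}{\alpha}$, and so a relative error $|\hat q-q|\le q\cdot O(\varepsilon\ln\frac1\alpha)$ once $\varepsilon\ln\frac1\alpha\le 1$. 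We use the mean value theorem along the segment between $\theta$ and $\hat\theta$, which is inside the neighborhood where these bounds hold.

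\textbf{Step 2 (MLE concentration).} By the standard MLE regularity assumptions (asymptotic normality with Fisher information $\mathcal{I}(\theta)$ bounded away from zero), we obtain a sub-Gaussian-type deviation bound $\Pr[\|\hat\theta-\theta\|>\varepsilon]\le C\exp(-c\,n\varepsilon^2)$. Non-asymptotically, this follows from concentration of the score, since the log-likelihood derivatives of a generalized normal are sub-exponential, together with local strong concavity of the log-likelihood. Setting the right side equal to $\delta$ gives $\varepsilon\asymp\sqrt{\ln(1/\delta)/n}$.

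\textbf{Step 3 (combine).} We require $\varepsilon\ln\frac{1}{\alpha}\le c_0$ for a small constant $c_0$. This makes the relative error at most, say, $\tfrac15$, matching the $\Delta q\le q/5$ remark. It gives $n\gtrsim \ln^2\frac{1}{\alpha}\ln\frac{1}{\delta}$. To conclude with additive error $\alpha$ there are two cases. If $q\le\alpha$ (or $\hat q$ comparable to $\alpha$), the relative bound gives $|\hat q-q|\le q/5\le\alpha$. If $q$ is larger, the additive error is governed by the $O(1/\sqrt n)$ term, since $q\,|\nabla\ln q|$ is bounded uniformly. This bound is already below $\alpha$ wherever it matters for deciding whether $q\le\alpha$. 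More cleanly, we apply the argument to the monotone comparison: $\hat q\le\alpha$ iff the estimated quantile exceeds $T$.

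\textbf{Main obstacle.} The hardest step is Step 1's uniformity, specifically the $\beta$-derivative and the need for constants independent of $T$. The derivative $\partial_\beta\ln q$ carries a $u\ln u$-type factor that could add a $\ln\ln\frac1\alpha$ term. We would first handle this by absorbing it into the regularity set, i.e.\ assuming $\beta$ lies in a compact interval and $(T-\mu)/\gamma$ is bounded by a polylog. Failing that, we would state the bound up to $\ln\ln$ factors, which are hidden in the $\Omega$.
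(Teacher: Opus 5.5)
Your route is the paper's: bound $\nabla_\theta q$ by $O(q\ln\frac{1}{q})$, multiply by an MLE deviation of order $\sqrt{\ln(1/\delta)/n}$, and solve for $n$. The paper gets the gradient from the score identity $\partial_\theta q = q\,\IE[\partial_\theta\ln f(X;\theta)\mid X\ge T]$ and the deviation from asymptotic normality plus a $\chi^2_3$ quantile. Your incomplete-gamma derivatives, mean value theorem and non-asymptotic concentration are the more careful version. You also correctly require $\mathcal{I}(\theta)$ to be bounded away from zero, which is what is needed, not merely bounded.

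\textbf{The $\beta$-derivative gap.} The gap is the one you flag as the main obstacle, and neither of your remedies closes it. Your own computation gives $\partial_\beta\ln q\approx -u\ln\frac{T-\mu}{\gamma} = -\frac{1}{\beta}u\ln u$ with $u\approx\ln\frac{1}{\alpha}$. Hence $\|\nabla_\theta\ln q\| = \Theta(\ln\frac{1}{\alpha}\ln\ln\frac{1}{\alpha})$.
\begin{itemize}
\item Keeping $(T-\mu)/\gamma$ in a compact set (Step 1) contradicts this regime. With $\beta$ in a compact interval it makes $u$, hence $\ln\frac{1}{q}$, bounded, so nothing is said about small $\alpha$.
\item A polylogarithmic bound does not help, because $(T-\mu)/\gamma = u^{1/\beta}$ is already polylogarithmic and its logarithm is $\frac{1}{\beta}\ln\ln\frac{1}{\alpha}$.
\item A growing factor cannot be hidden in $\Omega(\cdot)$, which absorbs only constants.
\item Correlated MLE errors do not rescue it either. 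The gradient direction tends to the $\beta$-axis and $\Sigma=\mathcal{I}(\theta)^{-1}$ is positive definite, so the delta-method variance of $\ln\hat q$ is $\asymp u^2\ln^2 u\,\Sigma_{\beta\beta}/n$.
\end{itemize}
So with all three parameters fitted, the argument yields $n\gtrsim\ln^2\frac{1}{\alpha}(\ln\ln\frac{1}{\alpha})^2\ln\frac{1}{\delta}$, and for the plug-in estimator the extra factor is real rather than an artifact. The stated $\ln^2\frac{1}{\alpha}\ln\frac{1}{\delta}$ does hold if $\beta$ is treated as known. Then $\partial_\mu\ln q$ and $\partial_\gamma\ln q$ are $O(u)$, with only $\gamma$ confined to a compact subset of $(0,\infty)$ and no condition on $(T-\mu)/\gamma$. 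The paper reaches the clean bound only through the claim $\partial_\theta\ln f(X;\theta)\in O(X^\beta)$. That claim fails for the $\beta$-component, because $\partial_\beta(|x-\mu|/\gamma)^\beta = (|x-\mu|/\gamma)^\beta\ln(|x-\mu|/\gamma)$. So you have located a genuine weakness of the statement itself, not missed a trick.

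\textbf{The additive-error case split in Step 3.} The second case of your Step 3 does not give $\Delta q\le\alpha$, and nothing can. When $q$ is of constant order, $|\hat q-q|$ is of order $1/\sqrt{n}$. At $n\asymp\ln^2\frac{1}{\alpha}\ln\frac{1}{\delta}$ this is far larger than $\alpha$, so the additive guarantee genuinely fails there. The paper's final line tacitly assumes $q\ln\frac{1}{q}\lesssim\alpha\ln\frac{1}{\alpha}$, i.e.\ $q\lesssim\alpha$. Make that restriction explicit and treat the decision ``is $q\le\alpha$?'' on its own, either through the relative bound on $\alpha\le q\le 1/e$ or through your quantile comparison. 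Do not assert that the $1/\sqrt{n}$ term is below $\alpha$.

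Likewise, in the first case, for $q\ll\alpha$ the relative error need not stay below $\frac{1}{5}$, since $|\nabla\ln q|$ grows with $\ln\frac{1}{q}$. Only the additive bound survives there, and it needs a short argument that the error is monotone in $q$.
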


\begin{proof}
Let $\hat{\theta}$ be the MLE approximation of $\theta$, and let $f(x ; \theta)$ and $F(x ; \theta)$ denote the probability and cumulative distribution function, conditional on parameters $\theta$. Without loss of generality, assume that $\mu = 0$.
% For a fixed test statistic $T$, let $p(\theta)$ denote the $p$-value computed using the parameter set $\theta$, i.e. $p(\theta) = 1-F(T;\theta)$. 

Consider the relative error due to estimation error of $\theta$ can be estimated using the Delta method  as
\begin{equation} \label{eqn:delta_p}
    \Delta q \approx \frac{\partial q}{\partial \theta} ||\theta - \hat{\theta}||_2,
\end{equation}
see Corollary 1.1 of \cite{shao2003mathematical} for more details.
It now remains to bound $\frac{\partial q}{\partial \theta}$ and $||\theta - \hat{\theta}||_2$.
To bound $\frac{\partial q}{\partial \theta}$, first observe that 
\begin{equation} \label{eqn:partial_1}
\begin{split}
\frac{\partial q}{\partial \theta}& = - \frac{\partial F(T ; \theta)}{\partial \theta}\\
&= -\frac{\partial}{\partial \theta} \int_{-\infty}^{T} f(x ; \theta)\, dx\\
&= -\frac{\partial}{\partial \theta} \left(1-\int_{T}^{\infty} f(x ; \theta)\, dx\right)\\
&= \frac{\partial}{\partial \theta}\int_{T}^{\infty} f(x ; \theta)\, dx.
\end{split}
\end{equation}
Furthermore, we have
\begin{equation} \label{eqn:partial_2}
    \frac{\partial f(x ; \theta)}{\partial \theta} = f(x ; \theta) \frac{\partial \ln f(x ; \theta)}{\partial \theta},
\end{equation}
Substituting (\ref{eqn:partial_2}) into (\ref{eqn:partial_1}), we have
\begin{equation} \label{eqn:partial_3}
\begin{split}
    \frac{\partial q}{\partial \theta} &= \int_{T}^\infty f(x ; \theta) \frac{\partial \ln f(x ; \theta)}{\partial \theta} \, dx \\
    &= \IE\left[\frac{\partial \ln f(X ; \theta)}{\partial \theta} 1(X \geq T) \right] \\
    &= q  \IE\left[\frac{\partial \ln f(X ; \theta)}{\partial \theta} \middle| X \geq T \right] 
\end{split}
\end{equation}
where expectation is over $X$ which follows a GND distribution parameterized by $\theta$. Since $\ln f(X ; \theta) = C_{\gamma, \beta} - K_{\gamma, \beta} X^\beta$, where $C_{\gamma, \beta}$ and $K_{\gamma, \beta}$ are constants depending on $\gamma$ and $\beta$, it follows that $\frac{\partial \ln f(X ; \theta)}{\partial \theta}  \in O(X^\beta)$. By the tail bound of sub-exponential distributions,  we have $q(\theta) = \IP[X \geq T] \leq \exp( - K_{\gamma, \beta} X^\beta )$, or equivalently, $\ln \frac{1}{q(\theta)} \geq K_{\gamma, \beta} X^\beta $. Thus, 
\begin{equation} \label{eqn:partial_bound_ln}
    \frac{\partial \ln f(X ; \theta)}{\partial \theta} \in O\left( \ln \frac{1}{q(\theta)} \right).
\end{equation}
Substituting (\ref{eqn:partial_bound_ln}) into (\ref{eqn:partial_3}) yields
\begin{equation} \label{eqn:partial_p_ub}
    \frac{\partial q}{\partial \theta} \lesssim q \ln \left(\frac{1}{q} \right).
\end{equation}

Next, we bound $||\theta - \hat{\theta}||_2$, where $\hat{\theta}$ corresponds to the MLE estimate of $\theta$. Under standard MLE regularity conditions, we have $\hat{\theta} \sim \mathcal{N}\left(0, \frac{1}{n} \mathcal{I}(\theta)^{-1} \right)$, where $n$ denotes the number of independent samples used to estimate $\hat{\theta}$, and $\mathcal{I}(\theta)$ is the Fisher information of $\theta$. Equivalently, we have $Z = \sqrt{n}(\hat{\theta} - \theta) \sim \mathcal{N}(0, \Sigma)$, where $\Sigma = \mathcal{I}(\theta)^{-1}$. 

Consider the eigendecomposition $\Sigma = U \Lambda U^T$, and transformation $Y = \Lambda^{-1/2} U^T Z$. Since $\IE[Y] = \Lambda^{-1/2} U^T 0 = 0$, and $Cov(Y) =\Lambda^{-1/2} U^T \Sigma (\Lambda^{-1/2} U^T)^T = I$, we have that $Y \sim \mathcal{N}(0, I)$. Since $\dim(\theta) =3$, it follows that $Y^T Y  \sim \chi^2_3$. Next, we obseve that $Z^T Z = Y^T \Lambda Y \leq \lambda_{\max} Y^T Y$, where $\lambda_{\max}$ is the largest eigenvalue of $\mathcal{I}(\theta)^{-1}$. Therefore, we have $||\hat{\theta} - \theta||_2^2 \leq \frac{\lambda_{\max}}{n} Y^T Y$. Thus, with probability at least $1-\delta$, it follows that 
\begin{equation} \label{eqn:theta_bound_delta}
||\hat{\theta} - \theta||_2 \leq \sqrt{\frac{\lambda_{\max}}{n} \chi^2_{3, 1-\delta}} \lesssim \sqrt{\frac{\lambda_{\max}}{n} \left(3 + \ln \frac{1}{\delta}\right)}
\end{equation}
where $ \chi^2_{3, 1-\delta}$ denotes the $1-\delta$ quantile of the $\chi^2_3$ distribution, and the rightmost bound is due to the Laurent-Massart bound (see Lemma 1 of \cite{laurent}). 

Substituting (\ref{eqn:partial_p_ub}) and (\ref{eqn:theta_bound_delta}) into (\ref{eqn:delta_p}), we obtain 
\[
    \Delta q \lesssim q \ln \left(\frac{1}{q} \right)\sqrt{\frac{\lambda_{\max}}{n} \left(3 + \ln \frac{1}{\delta}\right)}
\]

Finally, by Assumption A2, the similarity scores follow a sub-exponential distribution. This implies finite second moments and, consequently, a bounded second moment of the score function. As a result, the Fisher information matrix has bounded operator norm, and in particular its largest eigenvalue $\lambda_{\max}$ is finite. Thus, we have,
\[
\Delta q \in O\left( q \ln\frac{1}{q} \sqrt{\frac{1}{n} \ln \frac{1}{\delta}} \right).
\]
To bound $\Delta q \leq \alpha$, it suffices for $$n = \Omega \left(\ln^2 \frac{1}{\alpha} \ln \frac{1}{\delta}\right)$$. 
\end{proof}

\section{Proof of Theorem \ref{thm:poa_ppt}} \label{appendix:poa_ppt}
To formalize the impossibility of a polynomial-time forger over a random forger with non-negligible advantage, consider the following variant of the PRF Indistinguishability game: 
\begin{game}[PRF Indistinguishability Over Chosen Input] \label{game:prf}
The challenger and adversary interact as follows:
\begin{enumerate}
    \item The challenger fixes $i, m$ and $e$, and uniformly samples $r$
    \item The adversary is given $(i, m, e, r)$.
    % \item \textbf{Constraint:} The adversary is prohibited from evaluating $f_i(\cdot)$ offline.
    \item The adversary is allowed a polynomial number of queries to $f_i(\cdot)$ on any inputs of its choice, \textbf{except} the challenge input $\langle m,e,r\rangle$.
    \item Challenger samples $b_0 \leftarrow Unif(\{0,1\}^n)$,  and $b_1 = f_i(\langle m, e, r\rangle)$, where $b_0$ and $b_1$ are both $n$-bit sequences.

    \item The challenger samples $\sigma \leftarrow \mathrm{Unif}(\{0,1\})$ and sends $b_\sigma$ to the adversary.

    \item Adversary outputs $\hat{\sigma}$ and wins if and only if  $\hat{\sigma} = \sigma$.
\end{enumerate}
\end{game}

By the PRF assumption, no polynomial-time adversary has a non-negligible advantage over random guessing in Game~\ref{game:prf}. In Theorem~\ref{thm:poa_ppt} below, we show that no polynomial-time forger can obtain a non-negligible advantage over a random forger. The proof is by reduction, if such a forger exists, one could construct a polynomial-time adversary that wins Game~\ref{game:prf} with non-negligible advantage over random guessing, which contradicts the PRF assumption. 

\begin{theorem*}[\emph{Theorem \ref{thm:poa_ppt} restated}]
    Under the PRF assumption, for all polynomial-time forger $\phi$, $I$, $\kappa, \delta$ and $\alpha$,  $Adv_\phi(I, \kappa, \alpha, \delta) \leq negl(\lambda)$, where $\lambda$ is the security parameter of the PRF family. 
\end{theorem*}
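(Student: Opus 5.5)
We argue by contradiction via a reduction to Game~\ref{game:prf}. Suppose some PPT forger $\phi$ and some $(I,\kappa)$ give $Adv_\phi(I,\kappa)\ge \epsilon(\lambda)$ for a non-negligible $\epsilon$. We build a PPT distinguisher $\mathcal{B}$ that wins Game~\ref{game:prf} with probability at least $\frac12+\mathrm{poly}(\epsilon)$, which contradicts Assumption A1.

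\emph{Step 1: recast the success event as a comparison of two starting points.} The forger's output is $\widetilde{\kappa}=\langle\widetilde m,\widetilde e,\widetilde r\rangle$. Its object is $L_{\widetilde m}(\widetilde e,G(f_j(\widetilde\kappa)))$, and this is compared against the baseline $S_i^{m,e,I}(r')$ for a fresh uniform $r'$. For a fresh input, the PRF property (applied through the deterministic $G$, as in Section~\ref{subsection:gf}) makes $G(f_i(\langle m,e,r'\rangle))$ indistinguishable from $G(U)$ with $U$ uniform. So the baseline is, up to negligible error, $\mathrm{Sim}(L_m(e,G(U)),I)$, i.e.\ a random forger's score.

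\emph{Step 2: hybrid argument.} Consider a hybrid world in which every evaluation of $f_j$ (the forger's own key) is replaced by a lazily sampled random oracle. There are two cases.
\begin{itemize}
\item If $\widetilde\kappa$ was never queried, its seed is a fresh uniform string independent of everything else. The forged score then has the same distribution as the baseline score.
\item If the forger did query $\widetilde\kappa$, the reduction still answers with a fresh random value. The forger is then simply a random forger making polynomially many tries.
\end{itemize}
We compare the forger's single output against a single baseline sample. In the hybrid world, both scores are i.i.d.\ draws of the same random variable $\mathrm{Sim}(L_m(e,G(U)),I)$, because $\phi$'s choice of $\widetilde\kappa$ cannot correlate with the fresh oracle value on $\widetilde\kappa$. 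By symmetry of i.i.d.\ variables, $\IP[X>X']\le\frac12$. Hence the advantage in the hybrid world is at most $0$. Ties only help this bound.

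\emph{Step 3: build the distinguisher.} We must show that the real world and the hybrid world differ by a negligible amount. $\mathcal{B}$ plays Game~\ref{game:prf} for key $j$ on the challenge $\langle\widetilde m,\widetilde e,\widetilde r\rangle$. To do this, $\mathcal{B}$ runs $\phi(I,\kappa)$, answering its oracle queries with $\mathcal{B}$'s own oracle. Since $\kappa$ and $I$ are public and $I$ is fixed, $\mathcal{B}$ can generate them itself.

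A subtlety is that the challenge input must be fixed before $b_\sigma$ is revealed, while $\widetilde\kappa$ is chosen adaptively by $\phi$. We handle this with the standard adaptive PRF game (equivalently, by letting the challenger fix $(m,e)$ and letting $\mathcal{B}$ commit to $\widetilde r$ output by $\phi$). Upon receiving $b_\sigma$ as the seed for $\widetilde\kappa$, $\mathcal{B}$ does the following:
\begin{enumerate}
\item compute the forged score $\mathrm{Sim}(L_{\widetilde m}(\widetilde e,G(b_\sigma)),I)$;
\item sample $r'$ and compute the baseline $S_i^{m,e,I}(r')$ (the key $i$ is public to the framework, or simulated via the oracle);
\item output $\hat\sigma=1$ if the forged score exceeds the baseline, and $\hat\sigma=1$ or $0$ uniformly otherwise.
\end{enumerate}
When $\sigma=1$, $\mathcal{B}$ sees the real forger's win probability $\frac12+\epsilon$. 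When $\sigma=0$, it sees the hybrid, with win probability at most $\frac12$. So $\mathcal{B}$'s distinguishing advantage is at least $\epsilon/2$, a contradiction. Each step runs one LDM evaluation, which we treat as polynomial-time.

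\emph{Main obstacle.} The delicate point is the adaptivity and query-exclusion issue. Game~\ref{game:prf} fixes the challenge input in advance and forbids querying it, whereas $\phi$ picks $\widetilde\kappa$ after making queries and may even have queried it. We would first handle this with a guessing argument: $\mathcal{B}$ guesses which of $\phi$'s $\mathrm{poly}(\lambda)$ queries equals its final output (or that it is new), and embeds the challenge there. This costs only a polynomial factor in advantage. If that proves awkward, we would instead invoke the standard equivalence between adaptive PRF security and a random oracle.
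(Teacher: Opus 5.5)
The proof breaks in Step~2, at the claim that in the hybrid world the forged score and the baseline are i.i.d.\ copies of $\mathrm{Sim}(L_m(e,G(U)),I)$ ``because $\phi$'s choice of $\widetilde\kappa$ cannot correlate with the fresh oracle value on $\widetilde\kappa$''. Your own second case contradicts this: if $\phi$ queried $\widetilde\kappa$, it saw the answer before committing to $\widetilde\kappa$. For a concrete forger, let $\phi$ query its own $f_j$ at $\langle m,e,\widetilde r_1\rangle$ and $\langle m,e,\widetilde r_2\rangle$ for two fresh strings, compute both scores, and output the better one. In the hybrid its score is the maximum of two i.i.d.\ draws. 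That beats one independent baseline draw with probability $2/3$ for continuous scores, so $Adv_\phi\approx 1/6$ in the hybrid, not $\le 0$. By the same indistinguishability you invoke, the real world behaves the same up to a negligible term. So there is nothing for $\mathcal{B}$ to distinguish. The guessing argument in your last paragraph cannot help, because the defect is in the hybrid itself, not in the real-versus-hybrid gap. With the single-draw baseline of Definition~\ref{definition:adv_forger}, the bound is not derivable from A1 at all: this PPT forger violates it.

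A second unjustified step is Case~1's ``same distribution as the baseline''. This holds only if $(\widetilde m,\widetilde e)=(m,e)$. A1 says nothing about how $\mathrm{Sim}(L_{\widetilde m}(\widetilde e,G(U)),I)$ compares with the baseline for other choices. A forger that picks an embedding or meta-parameters under which the LDM output concentrates near $I$ wins with probability near $1$ without touching the PRF. The paper dismisses this only empirically (Section~\ref{subsubsection:forger_attack_strategy}). A provable version therefore needs two changes. First, it must restrict or assume away other $(\widetilde m,\widetilde e)$. Second, it must compare $\phi$ against a random forger given the same number of $f_j$-evaluations, e.g.\ the maximum of $k+1$ baseline draws when $\phi$ makes at most $k$ queries. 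With these changes your lazy-sampling hybrid does give probability at most $\frac12$. Passing from $f_j$ to a random function is then the standard adaptive-oracle PRF step, valid when $\phi$ accesses $f_j$ only as an oracle, and it needs no fixed challenge point and no guessing. Your route differs from the paper's, which plants the challenge in the author's seed via $I'=L_m(e,b_\sigma)$ and simply asserts probability $\frac12$ when $b_\sigma$ is random. The paper's claim that $f_j(\widetilde\kappa)$ is indistinguishable from uniform ignores the same adaptive selection, so it does not supply the missing step either.
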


\begin{proof}

     Suppose such a PPT forger $\phi$ exists, and $Adv_\phi(I, \kappa)  = \beta$ for some non-negligible $\beta$ (w.r.t $\lambda$). We will show that given a fixed $m, e, i$ over varying $r$, $\phi$ can be used to construct a polynomial-time distinguisher which has a non-negligible advantage over random guessing in Game \ref{game:prf}. 
     
     Since $f_j$ is an independently keyed instance of a PRF family, we fix the forger's identity to be $j$ without loss of generality. Recall that forger  outputs  $\widetilde{\kappa}=\langle \widetilde{m}, \widetilde{e}, \widetilde{r}\rangle$ which corresponds to the forged object $L_{\widetilde{m}}(\widetilde{e}, G(f_j(\widetilde{\kappa})))$.

     Let $\mathcal{B} \leftarrow Unif(n)$ denote a uniformly sampled $n$-bit binary string. By the PRF assumption (Assumption A1), $\mathcal{B}$ and $\mathcal{B}_\phi :=   f_j(\widetilde{\kappa})$ are computationally indistinguishable. Since $G$ and $L_m$ are deterministic, the distributions of the generated objects $L_m(e, G(\mathcal{B}))$ and $L_m(e, G(\mathcal{B}_\phi))$ are computationally indistinguishable. Hence, no PPT algorithm $\phi$ has a non-negligble advantage over random guessing in deriving a $\widetilde{\kappa}$ such that $L_m(e, G(\mathcal{B}))$ and $L_m(e, G(\mathcal{B}_\phi))$ are similar. 

     Suppose such a $\phi$ exists, a PPT algorithm which has a non-negligible advantage in Game \ref{game:prf} can be constructed using $\phi$ as follows --  In (5) of Game \ref{game:prf}, consider an adversary which interacts with $\phi$ in the following way:
     \begin{enumerate}
         \item $\phi$ receives $\kappa=\langle m, e, r\rangle$ and $I' = L_m(e, b_\sigma)$
         \item $\phi$ attempts to forge a valid generation $\widetilde{\kappa} = \langle \widetilde{m}, \widetilde{e}, \widetilde{r}\rangle$ for $I'$
     \end{enumerate}
     Finally, the adversary outputs $\hat{\sigma}=1$ if $S_j^{\widetilde{m}, \widetilde{e}, I}(\widetilde{r}) -S_i^{m,e, I}(r) > 0$.
     Else, the adversary outputs $\hat{\sigma}=0$.
     
     If $b_\sigma = f_i(\langle m,e,r\rangle)$, by the definition of $Adv_\phi(I, \kappa)$, we have $\IP[\hat{\sigma}=0 | \sigma = 0] = \frac{1}{2} + \beta$.  Else, if $b_\sigma$ is a randomly sampled sequence,  $\IP[\hat{\sigma}=1 | \sigma = 1] = \frac{1}{2}$. Hence, 
     \[
     \begin{split}
        & \IP[\hat{\sigma} = \sigma]\\
        & = \IP[\hat{\sigma}=1 | \sigma = 1] \IP[\sigma=1]+\IP[\hat{\sigma}=0 | \sigma = 0] \IP[\sigma=0]   \\
        & = \frac{1}{2}\left(\frac{1}{2}+ \beta\right) + \frac{1}{2}\left(\frac{1}{2}\right)\\
        &= \frac{1}{2}(1+\beta),
    \end{split}
     \]
    and thus, $\phi$ has  a non-negligible $\frac{\beta}{2}$ advantage over random guessing in Game \ref{game:prf}. This contradicts the fact that $f_i$ is a PRF, and hence, no polynomial-time adversary has a non-negligible advantage over random guessing. Hence, there does not exist a PPT forger $\phi$ with non-negligible $Adv_\phi(I, \kappa)$.            
\end{proof}

The intuition behind Theorem \ref{thm:poa_ppt} is as follows: Regardless of how a PPT forger (with identity $j$) derives an alternative set of generation parameters $\kappa' = \langle m',e',r'\rangle$ such that $L_{m'}(e', G(f_j(\kappa')))$ is similar to the original object, the forger still has to sample the starting point $G(f_j(\kappa'))$. By the PRF assumption, this is computationally indistinguishable from $G(u)$, where $u$ is drawn uniformly at random. Since the forger is effectively forced to sample the starting point at random from $\mathcal{D}_0$, the $\ell_2$ distance between  $G(f_i(\kappa))$ and $G(f_j(\kappa'))$ is $\Omega(\sqrt{d})$ with high probability.
Therefore, by Lemma \ref{lemma:random_latent_rs}, the resulting generated objects will not be similar with high probability.

\section{Proofs in Section \ref{subsection:similarity}}

\begin{lemma*}[\emph{Lemma \ref{lemma:random_latent_rs} restated}]
    Let $s, s'$ denote two initial starting points sampled independently from $\mathcal{N}(0, I)$, and $\mathcal{L}_t, \mathcal{L}_t'$ denote the resulting latents after performing DDIM denoising on $s, s'$ over $t$ timesteps. The following relationship holds:
    \[
    \IE[||\mathcal{L}_t - \mathcal{L}_t'||_2] \geq f(\alpha_t) \IE[||s - s'||_2],
    \]
    where  $f(\alpha_t) = \frac{1 - \sqrt{1-\alpha_t}}{ \sqrt{\alpha_t}}$ and $\alpha_t>0$ is a small constant from the DDIM scheduler which is decreasing in $t$. 
\end{lemma*}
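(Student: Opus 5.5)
The plan is to express the DDIM output in closed form as an affine combination of the starting point and the network's noise prediction. Then I would apply the reverse triangle inequality and take expectations.

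Following the standard DDIM parameterisation, I would treat the $t$-step deterministic denoising from $s$ as producing
\[
\mathcal{L}_t = \frac{1}{\sqrt{\alpha_t}}\left(s - \sqrt{1-\alpha_t}\,\epsilon_\theta(s)\right),
\]
where $\alpha_t$ is the cumulative scheduler constant and $\epsilon_\theta(s)$ is the effective noise prediction. This formula is the $x_0$-prediction at the corresponding noise level. If several DDIM sub-steps are used, I would unroll them. Each step has the same affine form, and the products of the $\sqrt{\alpha}$ ratios telescope into the cumulative $\alpha_t$, so the composite map can still be written with a single aggregated noise term $\epsilon_\theta(s)$. The same expression holds for $s'$.

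Subtracting the two expressions gives
\[
\mathcal{L}_t-\mathcal{L}_t' = \frac{1}{\sqrt{\alpha_t}}\Bigl((s-s') - \sqrt{1-\alpha_t}\,\bigl(\epsilon_\theta(s)-\epsilon_\theta(s')\bigr)\Bigr).
\]
The reverse triangle inequality then yields
\[
\|\mathcal{L}_t-\mathcal{L}_t'\|_2 \ge \frac{1}{\sqrt{\alpha_t}}\Bigl(\|s-s'\|_2 - \sqrt{1-\alpha_t}\,\|\epsilon_\theta(s)-\epsilon_\theta(s')\|_2\Bigr).
\]
Taking expectations and using linearity, the claim reduces to showing
\[
\IE\|\epsilon_\theta(s)-\epsilon_\theta(s')\|_2 \le \IE\|s-s'\|_2 .
\]
With that inequality, the right-hand side is at least $\frac{1-\sqrt{1-\alpha_t}}{\sqrt{\alpha_t}}\IE\|s-s'\|_2 = f(\alpha_t)\,\IE\|s-s'\|_2$.

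The main obstacle is justifying this last inequality. My first approach would be distributional rather than pointwise. The network is trained so that the predicted noise is distributed as the true injected noise, namely $\mathcal{N}(0,I)$. Since $s$ and $s'$ are independent, $\epsilon_\theta(s)$ and $\epsilon_\theta(s')$ are independent as well. Hence $\epsilon_\theta(s)-\epsilon_\theta(s')$ has the same law as $s-s'$, and the two expectations are in fact equal.

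If this idealisation is judged too strong, I would fall back on the smoothness property cited in Section~\ref{section:notations}. Assuming $\epsilon_\theta$ is $1$-Lipschitz in the relevant regime would give the inequality pointwise. A weaker Lipschitz constant $K$ would change the factor to $\frac{1-K\sqrt{1-\alpha_t}}{\sqrt{\alpha_t}}$. I would also note that the bound is only informative when the right-hand side is positive, which holds since $f(\alpha_t)>0$ for $\alpha_t\in(0,1)$.
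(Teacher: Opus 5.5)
Your proposal is essentially the paper's proof. The paper starts from the DDIM relation $s=\sqrt{\alpha_t}\mathcal{L}_t+\sqrt{1-\alpha_t}\,\epsilon$ (your $x_0$-prediction formula rearranged), applies the triangle inequality to $s-s'$, and closes with exactly your step that $s-s'$ and $\epsilon-\epsilon'$ are both $\mathcal{N}(0,2I)$, so their expected norms coincide. That distributional step is an idealization in both versions: an MSE-trained noise predictor approximates $\IE[\epsilon\mid x_t]$, whose covariance is dominated by $I$ but not equal to it. Your Lipschitz fallback is therefore a reasonable hedge, though, as you note, it recovers $f(\alpha_t)$ only when $K\le 1$.
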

\begin{proof}
If the DDIM scheduler is used, the relationship between $\mathcal{L}_t$ and $s$ is given in \cite{DBLP:conf/iclr/SongME21} (Equation (4)):
    \begin{equation} \label{eqn:ddim}
       s   = \sqrt{\alpha_t} \mathcal{L}_t+ \sqrt{1-\alpha_t} \epsilon,
    \end{equation}
where $\alpha_t\in (0,1)$ is a scheduler-specific constant decreasing in $t$. 
% For example, $\alpha_1 \approx 0.999$ and $\alpha_{100} \approx 0.894$ in DDIM. 
    From (\ref{eqn:ddim}), 
    \[
       s - s'  = \sqrt{\alpha_t} (\mathcal{L}_t - \mathcal{L}_t')+ \sqrt{1-\alpha_t} (\epsilon - \epsilon') ,
    \]
    where $\epsilon' = \frac{s' - \sqrt{\alpha_t} \mathcal{L}_t}{\sqrt{1-\alpha_t}}$ (from  (\ref{eqn:ddim})). 
    From the triangle inequality, 
    \begin{equation} \label{eqn:bound1}
        ||s - s'||_2  \leq \sqrt{\alpha_t} ||\mathcal{L}_t - \mathcal{L}_t'||_2+ \sqrt{1-\alpha_t} ||\epsilon - \epsilon'||_2. 
    \end{equation}
    Since both $s - s'$ and $\epsilon - \epsilon'$ are $\mathcal{N}(0, 2I)$ random variables, $\IE[ ||s - s'||_2 ] = \IE[||\epsilon - \epsilon'||_2]$. Therefore, taking the expectation on (\ref{eqn:bound1}) and rearranging the terms, the bound follows. 

 \end{proof}

\begin{lemma*}[\emph{Lemma \ref{lemma:subexp} restated}]
    Let $X, Y_1, \dots, Y_n \in \mathbb{R}^d$ be latent vectors produced by the VAE component of an LDM. If $Y_1, \dots, Y_n$ are generated from different starting points sampled independently from $\mathcal{D}_0$ (possibly under the same prompt and meta-parameters), then the inner products ${X \cdot Y_i}_{i=1}^n$ are sub-exponential random variables with mean 0. Otherwise, if $Y_i$ and $X$ are generated using the same staring point, prompt, and meta-parameters, $X\cdot Y_i$ remains sub-exponential but has strictly positive mean. 
\end{lemma*}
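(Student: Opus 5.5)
We prove the two claims of Lemma~\ref{lemma:subexp} separately, working in the idealized VAE model of Section~\ref{subsubsection:vae}. In that model each coordinate of a latent produced from an independently sampled starting point has marginal prior $\mathcal{N}(0,1)$.

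\emph{Sub-exponential tails (both cases).} Write $X\cdot Y_i=\sum_{k=1}^d X_k Y_{i,k}$. Each coordinate $X_k$, $Y_{i,k}$ is Gaussian, hence sub-Gaussian with $\psi_2$-norm bounded by an absolute constant. The product of two sub-Gaussian variables is sub-exponential, with $\|X_kY_{i,k}\|_{\psi_1}\le \|X_k\|_{\psi_2}\|Y_{i,k}\|_{\psi_2}$. This holds whether or not the factors are independent, so it covers both cases. Since the $\psi_1$ quasi-norm satisfies the triangle inequality, the sum $X\cdot Y_i$ is sub-exponential with $\|X\cdot Y_i\|_{\psi_1}\le \sum_k \|X_kY_{i,k}\|_{\psi_1}<\infty$. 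When the coordinates are independent, as in the i.i.d.\ idealization invoked after Assumption A2, Bernstein's inequality gives the sharper centred bound with variance proxy $O(d)$. Dividing by $d$ then yields the $\frac1d\mathcal{L}^\top\mathcal{L}'$ form used in Theorem~\ref{theorem:hypothesis}.

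\emph{Mean zero in the independent case.} If $Y_i$ comes from a starting point $s_i\sim\mathcal{D}_0$ sampled independently of the one producing $X$, then $X$ and $Y_i$ are independent, since $L_m(e,\cdot)$ and the VAE are deterministic maps. Hence $\mathbb{E}[X\cdot Y_i]=\sum_k \mathbb{E}[X_k]\,\mathbb{E}[Y_{i,k}]$. Under the VAE prior $\mathcal{N}(0,I)$ each factor has mean $0$, so the expectation vanishes.

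\emph{Positive mean in the shared-starting-point case.} If $Y_i$ and $X$ use the same starting point, prompt and meta-parameters, they share the deterministic generated latent $Z=L_m(e,s)$. Model $Y_i=Z+\eta_i$ and $X=Z+\eta$, where the $\eta$'s are encoder noise (the $\sigma_x$ term in $q_\phi$) or distortion independent of $Z$ with mean zero. Then
$\mathbb{E}[X\cdot Y_i]=\mathbb{E}\|Z\|_2^2+\mathbb{E}[\eta\cdot\eta_i]\ge \mathbb{E}\|Z\|_2^2=d>0$,
where the cross terms vanish and the noise term is nonnegative: it is $0$ for independent noises and $\mathbb{E}\|\eta\|^2$ for identical ones. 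In the noiseless case it is simply $\mathbb{E}\|X\|_2^2=d$.

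\emph{Main obstacle.} The hardest step is justifying the modelling assumptions: that generated latents really have a centred, coordinatewise Gaussian-like marginal, and that the perturbation is zero-mean and independent of $Z$. The sub-exponential tail bound itself needs no independence across coordinates. The paper's statement, however, is about a learned model whose marginal is only approximately the prior. We will therefore state these as the idealized VAE prior hypotheses, and remark that with weak dependence the mean-zero conclusion is unchanged. The tail constant is governed by the coordinate $\psi_2$-norms rather than by the dependence structure.
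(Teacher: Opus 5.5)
Your proposal is correct at the paper's level of idealization and follows the same route as the paper's proof: invoke the $\mathcal{N}(0,I)$ VAE prior, write $X\cdot Y_i$ as a sum of $d$ sub-exponential coordinatewise products, get mean zero from the independence of the starting points, and get a positive mean because a shared starting point makes the coordinates positively correlated. Your version makes the paper's one-line steps explicit: the bound $\|X_kY_{i,k}\|_{\psi_1}\le\|X_k\|_{\psi_2}\|Y_{i,k}\|_{\psi_2}$ together with the triangle inequality removes any need for independence across coordinates, and the $Z+\eta$ decomposition makes ``positively correlated'' concrete; like the paper, it treats prompt-conditioned latents as centred, which you rightly flag as a modelling assumption.
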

\begin{proof}
    By the VAE training objective (as described in Section \ref{subsubsection:vae}), the unconditional prior distribution of VAE latent space follows a $\mathcal{N}(0, I)$ distribution. Thus, the inner product $X \cdot Y_i$ is the sum of $d$ sub-exponential random variables of expectation 0, which is thus sub-exponential with mean 0.  On the other hand, if $X$ and $Y_i$ are generated using the same staring point, prompt, and meta-parameters, $X \cdot Y_i$ is the sum of $d$ sub-exponential and positively correlated random variables, which therefore admits a positive mean. 
\end{proof}

\begin{theorem*} [\emph{Theorem \ref{theorem:hypothesis} restated}]
   Let $Sim(Z_1, Z_2) = \frac{1}{d}Z_1 \cdot Z_2$ denote the similarity score between the latent vectors $Z_1$ and  $Z_2$. For tractability, assume that $Sim(Z_1, Z_2) $ follows a generalized normal distribution with mean $\mu$, and common scale and shape parameters   $\gamma$ and $\beta$. Consider the hypothesis test $H_0: \mu = 0$ (dissimilar latents) v.s. $H_1: \mu > 0$ (similar latents), and the test statistic $T(Z_1, Z_2) = 1(\max(0,Sim(Z_1, Z_2)) \geq W_\alpha^I)$, where $W_I^\alpha$ is the $(1-\alpha)$ quantile of $W$ (as defined in Section \ref{subsection:PA_spec}), given $I$. It follows that $T$ is a uniformly most powerful test among all tests with a false-positive rate of at most $\alpha$.
\end{theorem*}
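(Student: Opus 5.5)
The plan is to reduce the statement to the Karlin--Rubin theorem for a one-parameter family with monotone likelihood ratio (MLR). Under the tractability assumption, the single observation $X = Sim(Z_1,Z_2)$ has the location-family density
\[
f(x;\mu) = C_{\gamma,\beta}\exp\bigl(-(|x-\mu|/\gamma)^{\beta}\bigr),
\]
with $\gamma,\beta$ known and common to $H_0$ and $H_1$. Only $\mu$ varies, and $H_0:\mu=0$ against $H_1:\mu>0$ is a one-sided problem. The null distribution of $X$ is exactly the distribution $W$ of Section~\ref{subsection:PA_spec}: by Lemma~\ref{lemma:subexp}, scores against independently sampled starting points have mean $0$. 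Hence $W_\alpha^I$ is the $(1-\alpha)$ quantile of $f(\cdot;0)$.

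The first step is to fix an arbitrary alternative $\mu_1>0$ and apply the Neyman--Pearson lemma to the simple pair $\mu=0$ versus $\mu=\mu_1$. The most powerful level-$\alpha$ test rejects when the log likelihood ratio
\[
\ell(x) = \gamma^{-\beta}\bigl(|x|^{\beta}-|x-\mu_1|^{\beta}\bigr)
\]
exceeds a constant. The second step is to show that $\ell$ is nondecreasing in $x$ for every $\mu_1>0$, which is the MLR property. For a location family this holds exactly when $\ln f(\cdot;0)$ is concave, i.e. when $f$ is log-concave. Since $x\mapsto -|x|^{\beta}$ is concave precisely when $\beta\ge 1$, I will restrict to $\beta\ge 1$ and then argue directly. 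The derivative $\ell'(x)\propto \beta\bigl(\mathrm{sgn}(x)|x|^{\beta-1}-\mathrm{sgn}(x-\mu_1)|x-\mu_1|^{\beta-1}\bigr)$ is nonnegative because $u\mapsto \mathrm{sgn}(u)|u|^{\beta-1}$ is nondecreasing for $\beta\ge1$.

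Given MLR, the rejection region $\{\ell(x)>k\}$ is an upper half-line $\{x\ge c\}$. Calibrating $\Pr_{\mu=0}[X\ge c]=\alpha$ forces $c=W_\alpha^I$. This cutoff does not depend on $\mu_1$, so the same test is most powerful against every alternative, and Karlin--Rubin gives that it is UMP at level $\alpha$. Because the power function $\mu\mapsto\Pr_\mu[X\ge c]$ is nondecreasing under MLR, the false-positive rate over $\{\mu\le 0\}$ is at most $\alpha$ if one wants the composite null. The final step is to remove the $\max(0,\cdot)$. For $\alpha<1/2$, symmetry of $f(\cdot;0)$ about $0$ gives $W_\alpha^I>0$, so $\{\max(0,X)\ge W_\alpha^I\}=\{X\ge W_\alpha^I\}$. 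The stated test is therefore identical to the Karlin--Rubin test.

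The main obstacle is the second step. The GND is not log-concave for $\beta<1$, and then $\ell$ is not monotone: the likelihood ratio eventually decreases back toward $1$ far in the upper tail. In that regime the Neyman--Pearson region depends on $\mu_1$, and no UMP test need exist. I would handle this by making $\beta\ge1$ an explicit hypothesis and justifying it from Assumption~A2: a sub-exponential law has tails decaying at least like $e^{-cx}$, which within the GND family forces $\beta\ge1$. Failing that, I would state the theorem only for $\beta\ge 1$. A secondary point is that the PA uses the estimate $\widehat{W}$ rather than the true $W$. The optimality claim is for the idealized test with the true quantile; the estimation error is accounted for separately by Theorem~\ref{thm:p_min_samples}.
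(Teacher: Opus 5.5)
Your proof is correct, but it takes a different route from the paper's. The paper assumes the clipped score $q=\max(0,Sim(Z_1,Z_2))$ is itself GND-distributed. It then computes the \emph{generalized} likelihood ratio $\sup_{\mu\ge0}f(q;\mu)/f(q;0)=\exp\bigl((q/\gamma)^\beta\bigr)$, notes that it is increasing in $q$, and cites Neyman--Pearson to conclude UMP. That argument is shorter and needs no condition on $\beta$. However, it does not actually establish the claim: Neyman--Pearson concerns simple-versus-simple ratios. Monotonicity of the generalized ratio, which holds for every $\beta>0$, does not make the test most powerful against each fixed $\mu_1>0$.

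Your route (Neyman--Pearson for $0$ versus a fixed $\mu_1$, then MLR from log-concavity, then Karlin--Rubin) is the rigorous one. It also exposes what the paper's computation hides. For $\beta<1$ the ratio $f(x;\mu_1)/f(x;0)$ peaks at $x=\mu_1$ and falls back toward $1$ in the upper tail. The most powerful regions are then bounded intervals that depend on $\mu_1$, so no UMP test exists.

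Your extra hypotheses are therefore necessary, not cosmetic:
\begin{itemize}
\item $\beta\ge1$, which you correctly tie to Assumption~A2, since within the GND family sub-exponential tails force $\beta\ge1$;
\item $\alpha<1/2$, since for $\alpha\ge1/2$ we have $W_\alpha^I\le0$, and $1(\max(0,X)\ge W_\alpha^I)$ rejects with probability $1$.
\end{itemize}
Removing the $\max(0,\cdot)$ by showing it is vacuous when $W_\alpha^I>0$ is also cleaner than the paper's device. A clipped variable has an atom at $0$, so it cannot be GND-distributed.
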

\begin{proof}
    Under the assumption that $q = \max(0, Sim(Z_1, Z_2))$ follows a generalized normal distribution, and admits pdf $f(x; \mu, \gamma, \beta) \propto \exp \left( - \left( \frac{|x-\mu|}{\gamma}\right)^\beta\right)$. 
    Consider the following likelihood ratio computed over one pair $(Z_1, Z_2)$, 
    \[
    \begin{split}
    l(q) &= \frac{\sup_{\mu \in H_1 \cup H_0} f(q : \mu, \gamma, \beta)}{\sup_{\mu \in H_0} f(q : \mu, \gamma, \beta)} \\
    &= \frac{f(q : q, \gamma, \beta)}{f(q :0, \gamma, \beta)} \\
    &= \exp \left(  \left( \frac{q}{\gamma}\right)^\beta\right),
    \end{split}
    \]
    which is monotonically increasing in $q$. 
    Thus, by the Neymann-Pearson lemma, the test $T(Z_1, Z_2) = 1(q > W_\alpha^I)$ is the uniformly most powerful test of power $\alpha$ for $H_0$ against $H_1$. 
\end{proof}

\section{$\ell_p$ Adversarial Perturbations on Similarity Function} \label{appendix:adversarial}
We  discuss the performance of $\ell_p$ bounded adversaries, for general $p\geq 1$. The notations will be reused from Section \ref{subsection:adv_attack}. Recall that for a given $p$, the adversary constructs $\widetilde{\mathcal{L}} = \mathcal{L} + v$, subject to the constraint $||v||_p \leq \epsilon$, with the goal of minimizing $Sim(\mathcal{L}, \widetilde{\mathcal{L}}) = \frac{1}{d} (\mathcal{L} \cdot \mathcal{L} + \mathcal{L} \cdot v)$. 

For a given $p\geq 1$, define its Holder conjugate $q = \frac{p}{p-1}$. By dual-norm of $\ell_p$ spaces, it can be shown that $|\mathcal{L} \cdot v| \leq \epsilon ||\mathcal{L}||_q$. 

Denote $v_i$ as the $i$-th coordinate of $v$. It can be shown that
\[
v_i = - \epsilon \, \frac{sign(\mathcal{L}_i) \, |\mathcal{L}_i|^{\,q-1}}{\|\mathcal{L}\|_q^{\,q-1}},
\]
for $i=1, \dots, d$ attains the minimum value $\mathcal{L}\cdot v = - \epsilon ||\mathcal{L}||_q$, and minimizes $Sim(\mathcal{L}, \widetilde{\mathcal{L}}) = Sim(\mathcal{L}, \mathcal{L}) - \frac{1}{d}\epsilon ||\mathcal{L}||_q$. Thus, at $p=1$, $Sim(\mathcal{L}, \widetilde{\mathcal{L}}) =  Sim(\mathcal{L}, \mathcal{L}) - \frac{1}{d}\epsilon ||\mathcal{L}||_\infty$, and as $p$ increases, $Sim(\mathcal{L}, \widetilde{\mathcal{L}})  \rightarrow Sim(\mathcal{L}, \mathcal{L}) - \frac{1}{d}\epsilon ||\mathcal{L}||_1$. Furthermore, it is also straightforward to show that $||\mathcal{L}||_q \in o(d)$. Thus for reasonable values of $\epsilon$ which preserves similarity between $\mathcal{L}$ and $\widetilde{\mathcal{L}}$, the impact of 
$\frac{1}{d}\epsilon ||\mathcal{L}||_q$ on the overall similarity is negligible.

\subsection{Distance Preservation Between Starting Points and Latents} \label{section:expt_distance}

We empirically validate Lemma~\ref{lemma:random_latent_rs} by showing that the distance between generated latents $\mathcal{L}_t$ and $\mathcal{L}_t'$ admits a constant-factor lower bound in terms of the distance between their respective starting points $s$ and $s'$. More specifically, given a pair of text embedding $e$ and LDM meta-parameter $m$, 
\begin{enumerate}
    \item Randomly sample 30 starting points $s_1, \cdots, s_{30}$ independently from $\mathcal{N}(0, I)$
    \item Generate the 30 corresponding latents $\mathcal{L}_t^1, \dots, \mathcal{L}_t^{30}$, where $\mathcal{L}_t^i = L_m(e, s_i)$ denotes the latent generated using the LDM conditional on $e$, $m$ and starting point $s_i$.
    \item Compute all ${30 \choose 2}$ pairwise ratios of $\ell_2$ distances $\frac{||\mathcal{L}_t^i - \mathcal{L}_t^j||_2}{||s_i - s_j||_2}$ for $1 \leq i < j \leq 30$.
\end{enumerate}
This procedure is repeated over $30$ text embeddings derived from prompts randomly sampled from the prompt dataset. The resulting empirical distribution of the distance ratios has mean 4.27 and standard deviation 0.44, and is illustrated in Figure \ref{fig:ratio_distr}. This empirically validates our claim in Lemma \ref{lemma:random_latent_rs} that pairwise distances between generated latents are lower bounded by a constant factor of the distances between their initial starting points.
\begin{figure}[!htbp]
    \centering
    \includegraphics[width=1\linewidth]{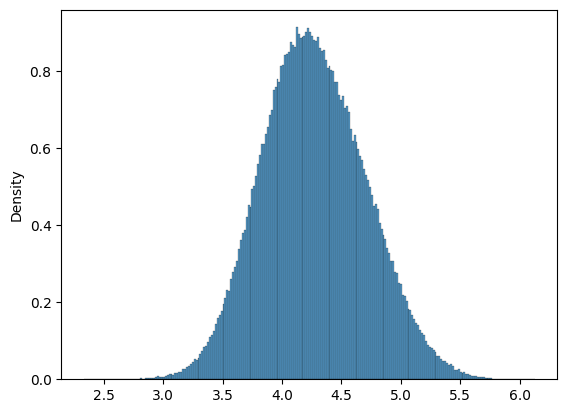}
    \caption{Distribution of the ratio between pairwise $\ell_2$ distances of generated latents and the corresponding $\ell_2$ distances between their starting points.}
    \label{fig:ratio_distr}
\end{figure}

\section{Impact of POA Scheme on Generation Process} \label{appendix:samples_generated} 
In our empirical evaluations, we only modified the $\mathcal{N}(0, I)$ sampling function of the LDM, leaving all other components unchanged. Therefore, we expect this scheme to work seamlessly for all LDMs which samples a random starting point from this distribution. 

Existing watermarking methods, such as Gaussian shading and tree-ring, have been shown to introduce detectable artifacts and can be easily removed. In contrast, since the $\mathcal{N}(0, I)$ sampling function relies on a PRG, we do not face these issues. 

Recall that in the POA framework, the random starting point is generated by passing $f_i(\langle m, e,r\rangle)$ into the PRG. Diversity in the generated outputs can still be ensured by varying the free bits $r$, even when the prompt and embedding remain fixed. By the PRF assumption, as $r$ varies, the values $f_i(\langle m, e,r\rangle)$ are computationally indistinguishable from truly random strings. Consequently, the resulting starting points generated by the POA scheme are indistinguishable from those obtained using truly random seeds.

Figure \ref{fig:generated_images} illustrates examples of images generated with identical prompts and meta-parameters but different values of $r$.

\begin{figure*}[!htbp]
\begin{subfigure}[b]{\linewidth}
\centering
    \includegraphics[width=0.165\linewidth]{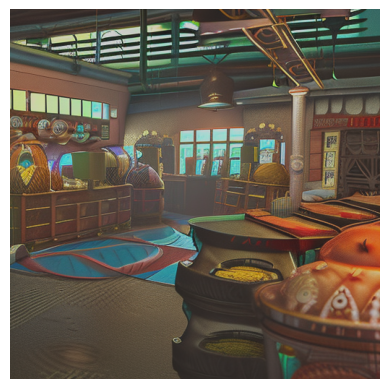}
    \includegraphics[width=0.165\linewidth]{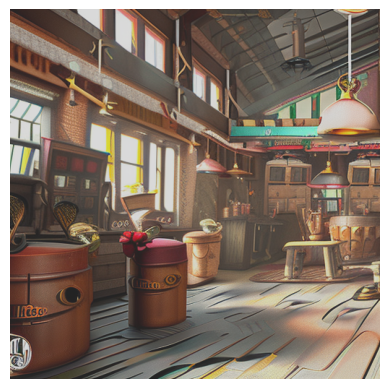}
    \includegraphics[width=0.165\linewidth]{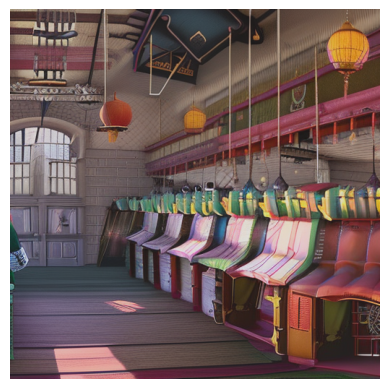}
    \includegraphics[width=0.165\linewidth]{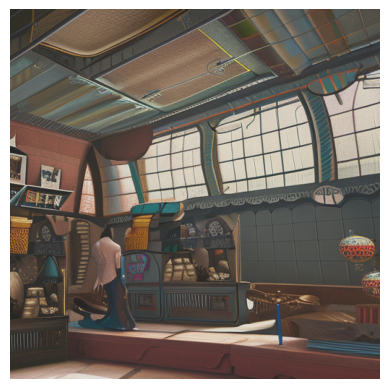}
    \includegraphics[width=0.165\linewidth]{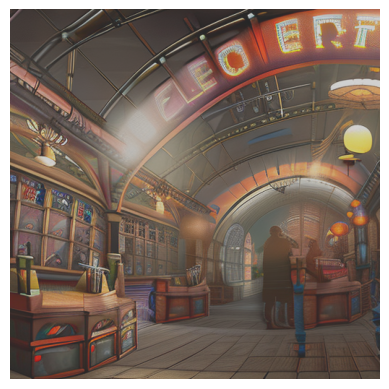}
    \\
    \includegraphics[width=0.165\linewidth]{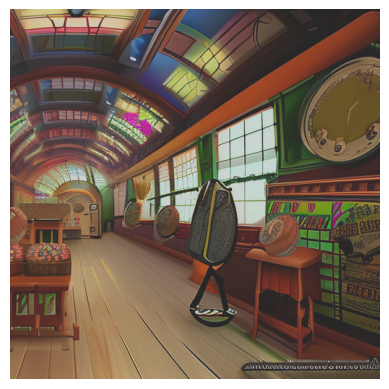}
    \includegraphics[width=0.165\linewidth]{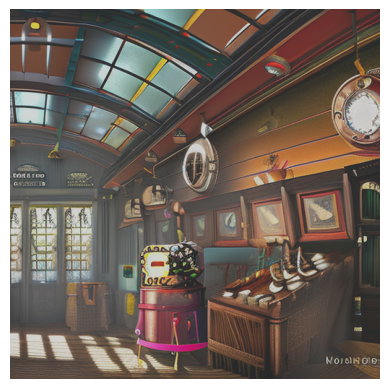}
    \includegraphics[width=0.165\linewidth]{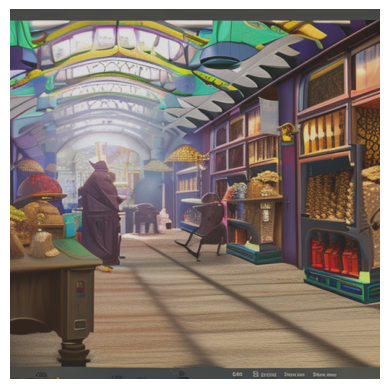}
    \includegraphics[width=0.165\linewidth]{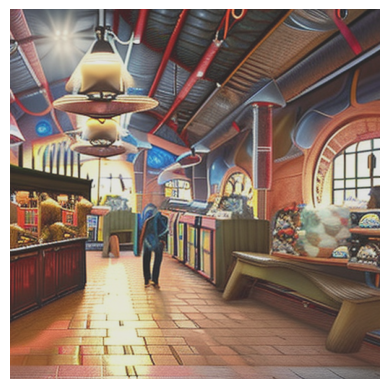}
    \includegraphics[width=0.165\linewidth]{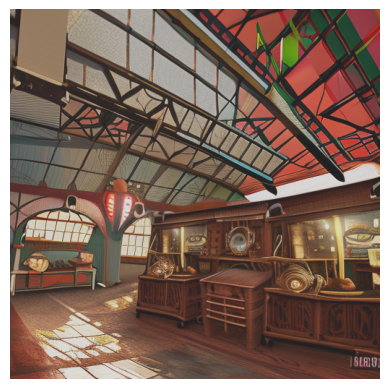}
    \caption{``steampunk market interior, colorful, 3 d scene, greg rutkowski, zabrocki, karlkka, jayison devadas, trending on artstation, 8 k, ultra wide angle, zenith view, pincushion lens effect''}
\end{subfigure}

\begin{subfigure}[b]{\linewidth}
\centering

    \includegraphics[width=0.165\linewidth]{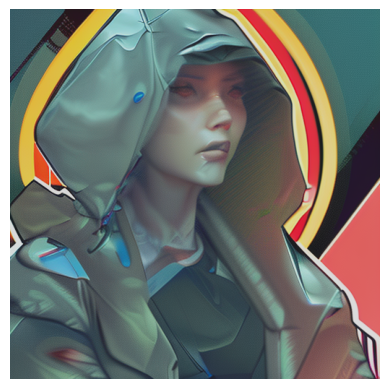}
    \includegraphics[width=0.165\linewidth]{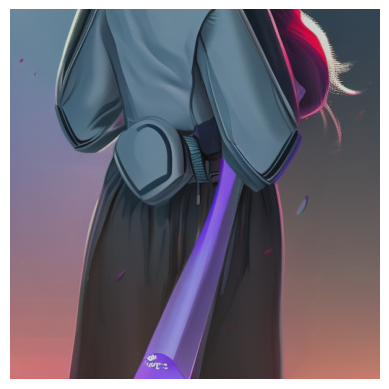}
    \includegraphics[width=0.165\linewidth]{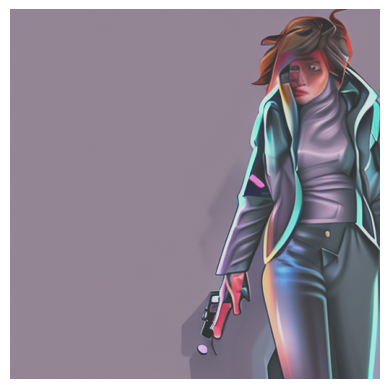}
    \includegraphics[width=0.165\linewidth]{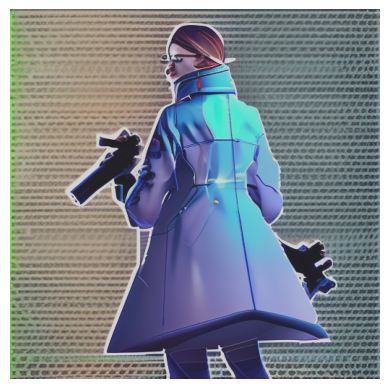}
    \includegraphics[width=0.165\linewidth]{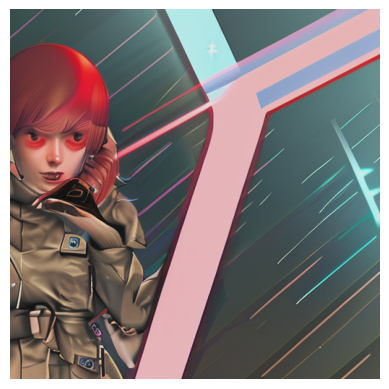}
    \\
    \includegraphics[width=0.165\linewidth]{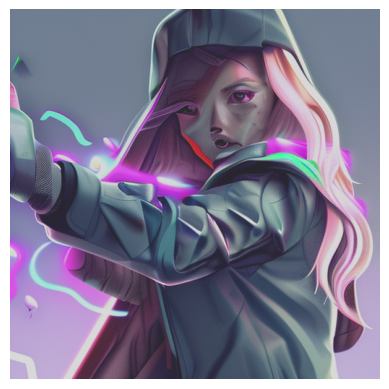}
    \includegraphics[width=0.165\linewidth]{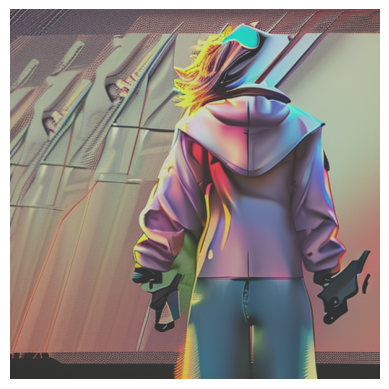}
    \includegraphics[width=0.165\linewidth]{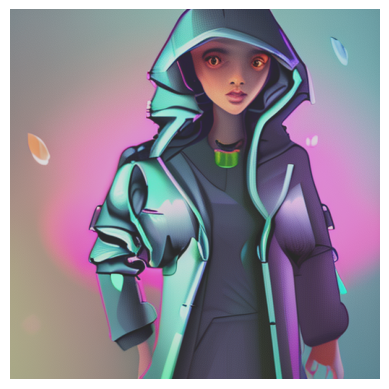}
    \includegraphics[width=0.165\linewidth]{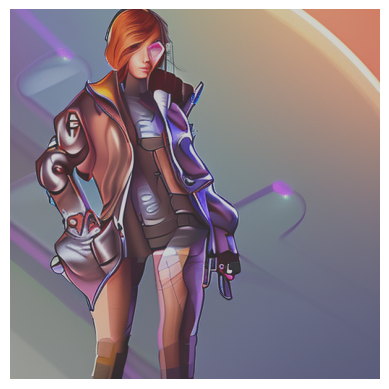}
    \includegraphics[width=0.165\linewidth]{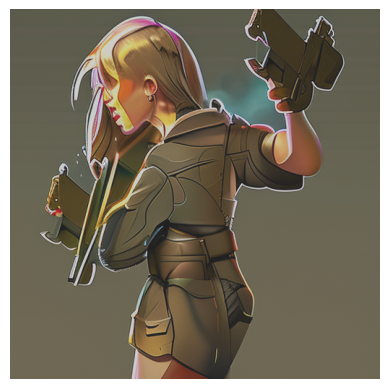}
    \caption{``girl in a futuristic raincoat, holding a revolver, character concept art, valorant game style, digital art, many details, super realistic, high quality, 8 k''}
\end{subfigure}

\begin{subfigure}[b]{\linewidth}
\centering

    \includegraphics[width=0.165\linewidth]{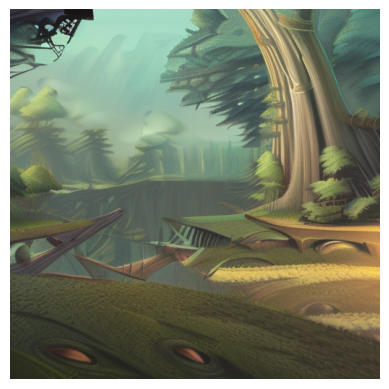}
    \includegraphics[width=0.165\linewidth]{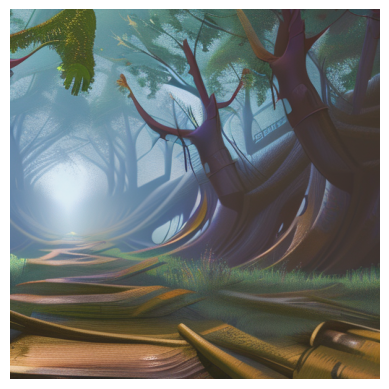}
    \includegraphics[width=0.165\linewidth]{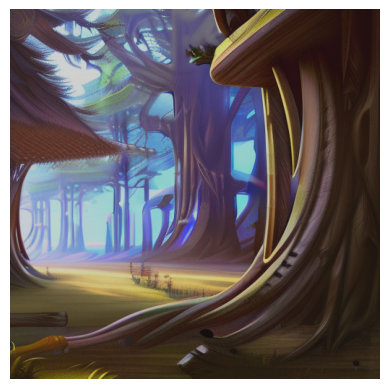}
    \includegraphics[width=0.165\linewidth]{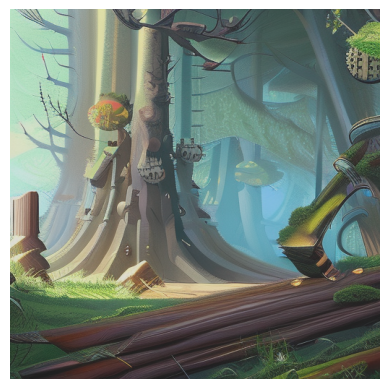}
    \includegraphics[width=0.165\linewidth]{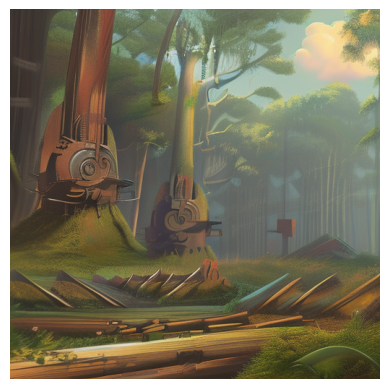}
    \\
    \includegraphics[width=0.165\linewidth]{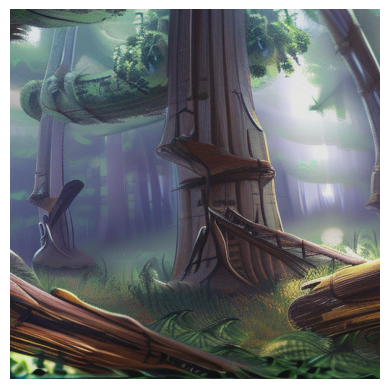}
    \includegraphics[width=0.165\linewidth]{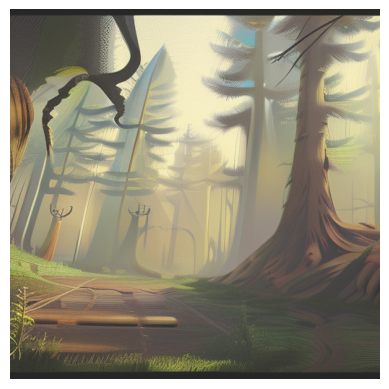}
    \includegraphics[width=0.165\linewidth]{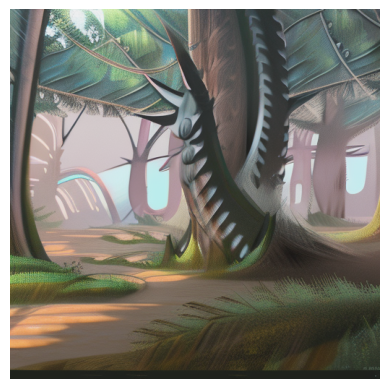}
    \includegraphics[width=0.165\linewidth]{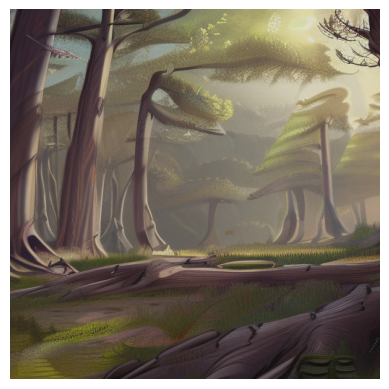}
    \includegraphics[width=0.165\linewidth]{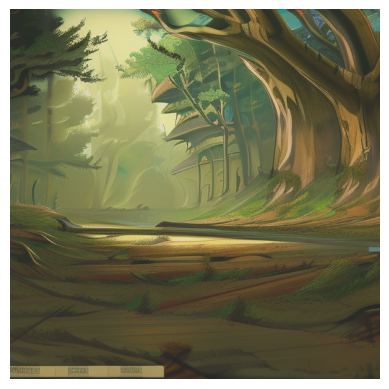}
    \caption{``good day, morning, landscape, no people, no man, steampunk, forest, wood, lost world, Anime Background, concept art, illustration,smooth, sharp focus, intricate, super wide angle, trending on artstation, trending on deviantart, 4K''}
\end{subfigure}
    \caption{Examples of images generated using the same and meta-parameters $m$ and prompt embeddings $e$ (the individual prompts are given in the captions)  but different free-bits $ r $. Despite fixed conditioning, variation in $ r $ yields diverse outputs as each different $r$ corresponds to a diferent starting point independently sampled from $\mathcal{N}(0,I)$. }

    \label{fig:generated_images}
\end{figure*}

\end{document}